\newtheorem{theorem}{Theorem}
\newtheorem{lemma}[theorem]{Lemma}
\newtheorem{proposition}[theorem]{Proposition}
\theoremstyle{definition}
\newcommand{\clB}{\mathcal B}
\newcommand{\clT}{\mathcal T}
\newcommand{\GTH}{\textsc{Gth}}
\newcommand{\OSP}{\textsc{Osp}}
\newcommand{\IND}{\textsc{Ind}}
\newcommand{\FPP}{\textsc{Fpp}}
\newcommand{\OR}{\textsc{Or}}
\newcommand{\AND}{\textsc{And}}
\newcommand{\Tribes}{\textsc{Tribes}}
\DeclareMathOperator{\bs}{bs}
\DeclareMathOperator{\fbs}{fbs}
\DeclareMathOperator{\FC}{FC}
\DeclareMathOperator{\C}{C}
\DeclareMathOperator{\RC}{RC}
\DeclareMathOperator{\QC}{QC}
\DeclareMathOperator{\CA}{CRA}
\DeclareMathOperator{\MM}{CMM}
\DeclareMathOperator{\KA}{CKA}
\DeclareMathOperator{\WA}{CWA}
\DeclareMathOperator{\prt}{prt}
\DeclareMathOperator{\RS}{RS}
\DeclareMathOperator{\D}{D}
\DeclareMathOperator{\R}{R}
\DeclareMathOperator{\Q}{Q}
\DeclareMathOperator{\RL}{RL}
\DeclareMathOperator{\LL}{LL}
\title{All Classical Adversary Methods are Equivalent \\ for Total Functions \thanks{This work is supported by the ERC Advanced Grant MQC and Latvian State Research Programme NexIT Project No. 1.}}
\date{}
\begin{document}

\renewcommand\Affilfont{\small}

\author{Andris Ambainis}
\author{Martins Kokainis}
\author{Krišjānis Prūsis}
\author{\authorcr Jevgēnijs Vihrovs}
\author{Aleksejs Zajakins}
\affil{Centre for Quantum Computer Science, Faculty of Computing, \authorcr University of Latvia, Rai\c{n}a 19, Riga, Latvia, LV-1586.}

\maketitle

\begin{abstract}
We show that all known classical adversary lower bounds on randomized query complexity are equivalent for total functions, and are equal to the fractional block sensitivity $\fbs(f)$.
That includes the Kolmogorov complexity bound of Laplante and Magniez and the earlier relational adversary bound of Aaronson.
This equivalence also implies that for total functions, the relational adversary is equivalent to a simpler lower bound, which we call rank-1 relational adversary.
For partial functions, we show unbounded separations between $\fbs(f)$ and other adversary bounds, as well as between the adversary bounds themselves.

We also show that, for partial functions, fractional block sensitivity cannot give lower bounds larger than $\sqrt{n \cdot \bs(f)}$, where $n$ is the number of variables and $\bs(f)$ is the block sensitivity.
Then we exhibit a partial function $f$ that matches this upper bound, $\fbs(f) = \Omega(\sqrt{n \cdot \bs(f)})$.
\end{abstract}

\section{Introduction}

Query complexity of functions is one of the simplest and most useful models of computation.
It is used to show lower bounds on the amount of time required to solve a computational task, and to compare the capabilities of the quantum, randomized and deterministic models of computation.
Thus providing lower bounds in the query model is essential in understanding the complexity of computational problems.

In the query model, an algorithm has to compute a function $f : S \to H$, given a string $x$ from $S \subseteq G^n$, where $G$ and $H$ are finite alphabets.
With a single query, it can provide the oracle with an index $i \in [n]$ and receive back the value $x_i$.
After a number of queries (possibly, adaptive), the algorithm must compute $f(x)$.
The cost of the computation is the number of queries made by the algorithm.

The query complexity of a function $f$ in the deterministic setting is denoted by $\D(f)$ and is also called the decision tree complexity.
The two-sided bounded-error randomized and quantum query complexities are denoted by $\R(f)$ and $\Q(f)$, respectively (which means that given any input, the algorithm must produce a correct answer with probability at least 2/3).
For a comprehensive survey on the power of these models, see \cite{BdW02}, and for the state-of-the-art relationships between them, see \cite{ABDK16}.

In this work, we investigate the relation among a certain set of lower bound techniques on $\R(f)$, called the classical adversary methods, and how they connect to other well-known lower bounds on the randomized query complexity.

\subsection{Known Lower Bounds}
One of the first general lower bound methods on randomized query complexity is Yao's minimax principle, which states that it is sufficient to exhibit a hard distribution on the inputs and lower bound the complexity of any deterministic algorithm under such distribution \cite{Yao77}. Yao's minimax principle is known to be optimal for any function but involves a hard-to-describe and hard-to-compute quantity (the complexity of the best deterministic algorithm under some distribution).

More concrete randomized lower bounds are block sensitivity $\bs(f)$ \cite{Nis89} and the approximate degree of the polynomial representing the function $\widetilde \deg(f)$ \cite{NS94}  introduced by Nisan and Szegedy.
Afterwards, Aaronson extended the notion of the certificate complexity $\C(f)$ (a deterministic lower bound) to the randomized setting by introducing randomized certificate complexity $\RC(f)$ \cite{Aar08}.
Following this result, both Tal and Gilmer, Saks and Srinivasan independently discovered the fractional block sensitivity $\fbs(f)$ lower bound \cite{Tal13,GSS16}, which is equal to the fractional certificate complexity $\FC(f)$ measure, as respective dual linear programs.
Since these measures are relaxations of block sensitivity and certificate complexity if written as integer programs, they satisfy the following hierarchy:
\begin{equation*}
\bs(f) \leq \fbs(f) = \FC(f) \leq \C(f).
\end{equation*}
Perhaps surprisingly, fractional block sensitivity turned out to be equivalent to randomized certificate complexity, $\fbs(f) = \Theta(\RC(f))$.
Approximate degree and fractional block sensitivity are incomparable in general, but it has been shown that $\fbs(f) \leq \widetilde\deg(f)^2$ \cite{KT16} and $\widetilde\deg(f) \leq \bs(f)^3 \leq \fbs(f)^3$ \cite{Nis89,BBCMdW01}.

Currently one of the strongest lower bounds is the partition bound $\prt(f)$ of Jain and Klauck \cite{JK10}, which is larger than all of the above mentioned randomized lower bounds (even the approximate degree), and the classical adversary methods listed below.
Its power is illustrated by the $\Tribes_n$ function (an $\AND$ of $\sqrt n$ $\OR$s on $\sqrt n$ variables), where it gives a tight $\Omega(n)$ lower bound, while all of the other lower bounds give only $O(\sqrt n)$.
The quantum query complexity $\Q(f)$ is also a powerful lower bound on $\R(f)$, as it is incomparable with $\prt(f)$ \cite{AKK16}.
Recently, Ben-David and Kothari introduced the randomized sabotage complexity $\RS(f)$ lower bound, which can be even larger than $\prt(f)$ and $\Q(f)$ for some functions \cite{BDK16}, and so far no examples are known where it is smaller.

In a separate line of research, Ambainis gave a versatile quantum adversary lower bound method with a wide range of applications \cite{Amb00}.
Since then, many generalizations of the quantum adversary method have been introduced (see \cite{SS06} for a list of known quantum adversary bounds).
Several of these formulations have been lifted back to the randomized setting.
Aaronson proved a classical analogue of Ambainis' relational adversary bound and used it to provide a lower bound for the local search problem \cite{Aar06}.
Laplante and Magniez introduced the Kolmogorov complexity adversary bound for both quantum and classical settings and showed that it subsumes many other adversary techniques. \cite{LM04}.
They also gave a classical variation of Ambainis' adversary bound in a different way than Aaronson.
Some of the other adversary methods like spectral adversary have not been generalized back to the randomized setting.

While some relations between the adversary bounds had been known before, Špalek and Szegedy proved that practically all known quantum adversary methods are in fact equivalent \cite{SS06} (this excludes the general quantum adversary bound, which gives an exact estimate on quantum query complexity for all Boolean functions \cite{HTS07, Rei09}).
This result cannot be immediately generalized to the classical setting, as the equivalence follows through the spectral adversary which has no classical analogue.
They also showed that the quantum adversary cannot give lower bounds better than a certain ``certificate complexity barrier''.
Recently, Kulkarni and Tal strenghtened the barrier using fractional certificate complexity.
Specifically, for any Boolean function $f$ the quantum adversary is at most $\sqrt{\FC^0(f)\FC^1(f)}$, if $f$ is total, and at most $2\sqrt{n\cdot \min\{\FC^0(f),\FC^1(f)\}}$, if $f$ is partial \cite{KT16}.\footnote{Here, $\FC^0(f)$ and $\FC^1(f)$ stand for the maximum fractional certificate complexity over negative and positive inputs, respectively.}

With the advances on the quantum adversary front, one could hope for a similar equivalence result to also hold for the classical adversary bounds.
Some relations are known: Laplante and Magniez have shown that the Kolmogorov complexity lower bound is at least as strong as Aaronson's relational and Ambainis' weighted adversary bounds \cite{LM04}.
Jain and Klauck have noted that the minimax over probability distributions adversary bound is at most $\C(f)$ for total functions \cite{JK10}.
In general, the relationships among the classical adversary bounds until this point remained unclear.

\subsection{Our Results}

Our main result shows that the known classical adversary bounds are all equivalent for total functions.
That includes Aaronson's relational adversary bound $\CA(f)$, Ambainis' weighted adversary bound $\WA(f)$, the Kolmogorov complexity adversary bound $\KA(f)$ and the minimax over probability distributions adversary bound $\MM(f)$.
Surprisingly, they are equivalent to the fractional block sensitivity $\fbs(f)$.

We also add to this list a certain restricted version of the relational adversary bound.
More specifically, we require that the relation matrix between the inputs has rank 1, and denote this (seemingly weaker) lower bound by $\CA_1(f)$.
Thus for total functions $\CA(f) = \Theta \left(\CA_1(f)\right)$, where the latter is much easier to calculate for Boolean functions.

All this shows that $\fbs(f)$ is a fundamental lower bound measure for total functions with many different formulations, including the previously known $\FC(f)$ and $\RC(f)$.
Another interesting corollary is that since the quantum certificate complexity $\QC(f) = \Theta(\sqrt{\RC(f)})$ is a lower bound on the quantum query complexity \cite{Aar08}, we have that by taking the square root of any of the adversary bounds above, we obtain a quantum lower bound for total functions.

Along the way, for partial functions we show the equivalence between $\CA(f)$ and $\WA(f)$, and also between $\KA(f)$ and $\MM(f)$.
In the case of partial functions, $\fbs(f)$ becomes weaker than all these adversary methods.
In particular, we show an example of a function where each of these adversary methods gives an $\Omega(n)$ lower bound, while fractional block sensitivity is $O(1)$.
We also show that $\CA(f)$ and $\MM(f)$ are not equivalent for partial functions, as there exists an example where $\CA(f)$ is constant, but $\MM(f) = \Theta(\log n)$.
Finally, we show a function such that $\CA_1(f) = O(\sqrt n)$, but $\CA(f) = \Omega(n)$.

We also show a ``block sensitivity'' barrier for fractional block sensitivity.
Namely, for any partial function $f$, the fractional block sensitivity is at most $\sqrt{n\cdot\bs(f)}$.
Note that the adversary bounds do not bear this limitation, as witnessed by the aforementioned example.
This result is tight, as we exhibit a partial function that matches this upper bound.

Even though our results are similar to the quantum case in \cite{SS06} in spirit, the proof methods are different.

\section{Preliminaries}

In this section we define the complexity measures we are going to work with in the paper.
In the following definitions and the rest of the paper consider $f$ to be a partial function $f : S \to H$ with domain $S \subseteq G^n$, where $G, H$ are some finite alphabets and $n$ is the length of the input string.
Throughout the paper we assume that $f$ is not constant.

\paragraph{Block Sensitivity.}

For $x \in S$, a subset of indices $B \subseteq [n]$ is a \emph{sensitive block} of $x$ if there exists a $y$ such that $f(x) \neq f(y)$ and $ B=\{ i \mid x_i \neq y_i \}$.
The \emph{block sensitivity} $\bs(f, x)$ of $f$ on $x$ is the maximum number $k$ of disjoint subsets $B_1, \ldots, B_k \subseteq [n]$ such that $B_i$ is a sensitive block of $x$ for each $i \in [k]$.
The block sensitivity of $f$ is defined as $\bs(f) = \max_{x \in S} \bs(f, x)$.

Let $\clB =\left\{B \mid \exists y : f(x) \neq f(y) \text{ and }  B=\{ i \mid x_i \neq y_i \}\right\}$ be the set of sensitive blocks of $x$.
The \emph{fractional block sensitivity} $\fbs(f, x)$ of $f$ on $x$ is defined as the optimal value of the following linear program:
\begin{align*}
\text{maximize } \sum_{B \in \clB} w_x(B) \hspace{1cm} \text{ subject to } \quad & \forall i \in [n]: \sum_{B \in \clB \atop i \in B} w_x(B) \leq 1.
\end{align*}
Here $w_x \in [0;1]^{|\clB|}$.
The fractional block sensitivity of $f$ is defined as $\fbs(f) = \max_{x \in S} \fbs(f, x)$.

When the weights are taken as either 0 or 1, the optimal solution to the corresponding integer program is equal to $\bs(f, x)$.
Hence $\fbs(f,x)$ is a relaxation of $\bs(f,x)$, and we have $\bs(f,x) \leq \fbs(f,x)$.

\paragraph{Certificate complexity.}

An \emph{assignment} is a map $A:\{1, \ldots, n\} \rightarrow G \cup \{*\}$.
Informally, the elements of $G$ are the values fixed by the assignment and * is a wildcard symbol that can be any letter of $G$.
A string $x \in S$ is said to be consistent with $A$ if for all $i \in [n]$ such that $A(i) \neq *$, we have $x_i = A(i)$.
The length of $A$ is the number of positions that $A$ fixes to a letter of $G$.

For an $h \in H$, an $h$-certificate for $f$ is an assignment $A$ such that for all strings $x \in A$ we have $f(x) = h$.
The \emph{certificate complexity} $\C(f, x)$ of $f$ on $x$ is the size of the shortest $f(x)$-certificate that $x$ is consistent with.
The certificate complexity of $f$ is defined as $\C(f) = \max_{x \in S} \C(f, x)$.

The \emph{fractional certificate complexity} $\FC(f, x)$ of $f$ on $x \in S$ is defined as the optimal value of the following linear program:
\begin{equation*}
\text{minimize } \sum_{i \in [n]} v_x(i) \hspace{1cm} \text{ subject to } \quad \forall y \in S \text{ s.t. } f(x) \neq f(y): \sum_{i : x_i \neq y_i} v_x(i) \geq 1.
\end{equation*}
Here $v_x \in [0;1]^n$ for each $x \in S$.
The fractional certificate complexity of $f$ is defined as $\FC(f) = \max_{x \in S} \FC(f, x)$.

When the weights are taken as either 0 or 1, the optimal solution to the corresponding integer program is equal to $\C(f, x)$.
Hence $\FC(f,x)$ is a relaxation of $\C(f,x)$, and we have $\FC(f,x) \leq \C(f,x)$.

It has been shown that $\fbs(f,x)$ and $\FC(f,x)$ are dual linear programs, hence their optimal values are equal, $\fbs(f,x) = \FC(f,x)$.
As an immediate corollary, $\fbs(f) = \FC(f)$.

\paragraph{One-sided measures.}

For Boolean functions with $H= \{0,1\}$, for each measure $M$ from $\bs(f), \fbs(f), \FC(f), \C(f)$ and a Boolean value $b \in \{0, 1\}$, define the corresponding one-sided measure as \begin{equation*}
M^b(f) = \max_{x \in f^{-1}(b)} M(f,x).
\end{equation*}
According to the earlier definitions, we then have $M(f) = \max\{M^0(f), M^1(f)\}$.
These one-sided measures are useful when, for example, working with compositions of $\OR$ with some Boolean function.

\paragraph{Kolmogorov complexity.}

A set of strings $\mathcal S \subset \{0, 1\}^*$ is called \emph{prefix-free} if there are no two strings in $\mathcal S$ such that one is a proper prefix of the other.
Equivalently we can think of the strings as programs for the Turing machine.
Let $M$ be a universal Turing machine and fix a prefix-free set $\mathcal S$.
The prefix-free \emph{Kolmogorov complexity} of $x$ given $y$, is defined as the length of the shortest program from $\mathcal S$ that prints $x$ when given $y$:
\begin{equation*}
K(x | y) = \min \{|P| \mid P \in \mathcal S, M(P, y) = x\}.
\end{equation*}
For a detailed introduction on Kolmogorov complexity, we refer the reader to \cite{MV08}.

\section{Classical Adversary Bounds}

Let $f : S \to H$ be a function, where $S \subseteq G^n$.
The following are all known to be lower bounds on bounded-error randomized query complexity.

\paragraph{Relational adversary bound \cite{Aar06}.}
Let $R : S \times S \to \mathbb R_{\geq 0}$ be a real-valued function such that $R(x,y)=R(y,x)$ for all $x,y \in S$ and $R(x,y)=0$ whenever $f(x)=f(y)$.  Then for $x \in S$ and an index $i$, let\footnote{We take the reciprocals of the expressions, compared to Aaronson's definition.}
\begin{equation*}
\theta(x, i) = \frac{\sum_{y \in S} R(x, y)}{\sum_{y \in S : x_i \neq y_i} R(x, y)},
\end{equation*}
%
where $\theta(x,i)$ is undefined if the denominator is 0. Denote\footnote{One can show that there exist optimal solutions for $R$, thus we can maximize over $R$ instead of taking the supremum.}
\begin{equation*}
\CA(f) = \max_R \min_{x, y \in S, i \in [n] : \atop R(x, y) > 0, x_i \neq y_i} \max\{\theta(x, i), \theta(y, i)\}.
\end{equation*}

\paragraph{Rank-1 relational adversary bound.}
We introduce the following restriction of the relational adversary bound.
Let $R'$ be any $|S| \times |S|$ matrix of rank 1, such that:
\begin{itemize}
\item There exist $u, v : S \to \mathbb R_{\geq 0}$ such that $R'(x,y) = u(x)v(y)$ for all $x, y \in S$.
\item $R'(x,y) = 0$ whenever $f(x) = f(y)$.
\end{itemize}
Then set $R(x,y)=\max\{R'(x,y),R'(y,x)\}$.

Let $X=\{x \mid u(x)>0\}$ and $Y=\{y \mid v(y)>0 \}$.
Note that for every $x\in S$, either $u(x)$ or $v(x)$ must be 0, as $R(x,x)$ must be 0, therefore $X \cap Y = \varnothing$. Then denote
\begin{equation*}
\CA_1(f) = \max_{u,v} \min_{x \in X, y \in Y, i \in [n] : \atop u(x)v(y) > 0, x_i \neq y_i} \max\{\theta(x, i), \theta(y, i)\}.
\end{equation*}
where $\theta(x,i)$ can be simplified to
\begin{equation*}
\theta(x, i) = \frac{\sum_{y \in Y} v(y)}{\sum_{y \in Y : x_i \neq y_i} v(y)} \hspace{1cm} \text{and} \hspace{1cm} \theta(y, i) = \frac{\sum_{x \in X} u(x)}{\sum_{x \in X : x_i \neq y_i} u(x)}.
\end{equation*}
Naturally, $\CA_1(f) \leq \CA(f)$.

As  $R(x,y)=0$ whenever $f(x)=f(y)$, we have that for every output $h \in H$ either $f^{-1}(h)\cap X = \varnothing$ or $f^{-1}(h)\cap Y = \varnothing$.
Therefore, $\CA_1(f)$ effectively bounds the complexity of differentiating between two non-overlapping sets of outputs.
This leads to the following equivalent definition for  $\CA_1(f)$:
\begin{proposition} \label{caa-prop}
Let $A \cup B = H$ be a partition of the output alphabet, i.e., $A \cap B = \varnothing$.
Let $p$ and $q$ be probability distributions over $X:=f^{-1}(A)$ and $Y:=f^{-1}(B)$, respectively.
Then
\begin{equation*}
\CA_1(f) = \max_{A, B \atop p, q} \min_{\substack{i \in [n], \\ g_1, g_2 \in G : g_1 \neq g_2 \\ \exists x \in X, y \in Y: p(x)q(y) > 0}} \frac{1}{\min\left\{\Pr_{x \sim p}[x_i \neq g_1], \Pr_{y \sim q}[y_i \neq g_2]\right\}}.
\end{equation*}
\end{proposition}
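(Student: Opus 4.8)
The plan is to show that the right-hand side equals the definition of $\CA_1(f)$ given just above the proposition by rewriting both $\theta(x,i)$ and $\theta(y,i)$ in probabilistic terms and reorganizing the order of the max/min. First I would normalize: given the nonnegative weight functions $u,v$ appearing in the definition of $\CA_1(f)$, set $p(x) = u(x)/\sum_{x'\in X}u(x')$ on $X=\{x: u(x)>0\}$ and $q(y)=v(y)/\sum_{y'\in Y}v(y')$ on $Y=\{y:v(y)>0\}$; these are probability distributions, and conversely any pair of distributions arises this way. Under this change of variables the simplified formulas become $\theta(x,i) = 1/\Pr_{y\sim q}[x_i\neq y_i]$ and $\theta(y,i) = 1/\Pr_{x\sim p}[x_i\neq y_i]$, since the constant normalization factors cancel in the ratios.

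Next I would handle the partition $A\cup B=H$. As observed in the text just before the proposition, because $R(x,y)=0$ whenever $f(x)=f(y)$, the supports $X$ and $Y$ never contain two inputs with the same $f$-value; equivalently, the set of outputs hit by $X$ is disjoint from the set hit by $Y$. So any admissible $(u,v)$ (equivalently $(p,q)$) induces a partition $A,B$ of $H$ with $X\subseteq f^{-1}(A)$, $Y\subseteq f^{-1}(B)$, and conversely, maximizing over all partitions $A,B$ and all distributions $p$ on $f^{-1}(A)$, $q$ on $f^{-1}(B)$ ranges over exactly the same set of admissible configurations (allowing $p,q$ to be supported on strict subsets is no loss, since that is already permitted). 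Hence the outer $\max$ over $A,B,p,q$ matches the outer $\max_{u,v}$.

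The last and most delicate step is to match the inner minimization. In the $\CA_1$ definition the min is over triples $(x,y,i)$ with $x\in X$, $y\in Y$, $p(x)q(y)>0$, $x_i\neq y_i$, of the quantity $\max\{\theta(x,i),\theta(y,i)\}$; I want to rewrite this as a min over $(i,g_1,g_2)$ with $g_1\neq g_2$ of $1/\min\{\Pr_{x\sim p}[x_i\neq g_1],\Pr_{y\sim q}[y_i\neq g_2]\}$. The key observation is that $\theta(x,i)=1/\Pr_{y\sim q}[y_i\neq x_i]$ depends on $x$ only through the letter $g_1 := x_i$, and similarly $\theta(y,i)$ depends on $y$ only through $g_2:=y_i$; moreover $\max\{\theta(x,i),\theta(y,i)\} = 1/\min\{\Pr_{y\sim q}[y_i\neq g_1],\Pr_{x\sim p}[x_i\neq g_2]\}$. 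So minimizing over the triple $(x,y,i)$ is the same as minimizing over $(i,g_1,g_2)$ ranging over pairs of distinct letters such that $g_1$ is the $i$-th letter of some $x$ in the support of $p$ and $g_2$ the $i$-th letter of some $y$ in the support of $q$ — which is precisely the constraint ``$\exists x\in X,y\in Y:p(x)q(y)>0$'' together with $g_1\neq g_2$ (note $x_i\neq y_i$ is exactly $g_1\neq g_2$). I expect the main obstacle to be bookkeeping around the constraint set: one must check that restricting to letters actually realized in the respective supports does not change the minimum, and that the $g_1\neq g_2$ condition correctly captures $x_i\neq y_i$ for \emph{some} admissible $x,y$ with those $i$-th letters (rather than for all such pairs); in particular one should argue that if $g_1\neq g_2$ and both letters are realized in the supports at position $i$, then we may pick $x,y$ with $x_i=g_1$, $y_i=g_2$, automatically giving $x_i\neq y_i$, so no admissible pair is lost. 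Once this correspondence is set up the two expressions are term-by-term equal and the proposition follows.
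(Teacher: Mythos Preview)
Your proposal is correct and follows essentially the same route as the paper's proof: normalize $u,v$ to probability distributions, observe that the supports induce a partition $A\cup B=H$ of the output alphabet, and then note that $\theta(x,i)$ and $\theta(y,i)$ depend on $x,y$ only through the letters $g_1=x_i$, $g_2=y_i$, so the inner minimum over $(x,y,i)$ collapses to a minimum over $(i,g_1,g_2)$. The only difference is the order of steps (you normalize before partitioning, the paper does the reverse), and you rightly flag the bookkeeping around the realizability constraint on $(g_1,g_2)$, which the paper handles in the same way by requiring $\exists x\in X,\,y\in Y$ with $x_i=g_1$, $y_i=g_2$ and $p(x)q(y)>0$.
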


For the proof of this proposition see Appendix \ref{caa-def}.

\paragraph{Weighted adversary bound \cite{Amb03, LM04}.}
Let $w, w'$ be weight schemes as follows.
\begin{itemize}
\item Every pair $(x, y) \in S^2$ is assigned a non-negative weight $w(x, y) = w(y, x)$ such that $w(x, y) = 0$ whenever $f(x) = f(y)$.
\item Every triple $(x, y, i)$ is assigned a non-negative weight $w'(x, y, i)$ such that $w'(x, y, i) = 0$ whenever $x_i = y_i$ or $f(x) = f(y)$, and $w'(x, y, i), w'(y, x, i) \geq w(x, y)$ for all $x, y, i$ such that $x_i \neq y_i$.
\end{itemize}
For all $x, i$, let $wt(x) = \sum_{y \in S} w(x, y)$ and $v(x, i) = \sum_{y \in S} w'(x, y, i)$.
Denote
\begin{equation*}
\WA(f) = \max_{w,w'} \min_{x, y \in S, i \in [n] \atop w(x,y) \neq 0, x_i \neq y_i} \max\left\{ \frac{wt(x)}{v(x,i)}, \frac{wt(y)}{v(y,i)}\right\}.
\end{equation*}

\paragraph{Kolmogorov complexity \cite{LM04}.}
Let $\sigma \in \{0, 1\}^*$ be any finite string.\footnote{By the argument of \cite{SS06}, we take the minimum over the strings instead of the algorithms computing $f$.}
Denote
\begin{equation*}
\KA(f) = \min_\sigma \max_{x, y \in S \atop f(x) \neq f(y)} \frac{1}{\sum_{i : x_i \neq y_i} \min\{2^{-K(i|x, \sigma)},2^{-K(i|y, \sigma)}\}}.
\end{equation*}

\paragraph{Minimax over probability distributions \cite{LM04}.}
Let $\{p_x\}_{x \in S}$ be a set of probability distributions over $[n]$.
Denote
\begin{equation*}
\MM(f) = \min_p \max_{x, y \in S \atop f(x) \neq f(y)} \frac{1}{\sum_{i : x_i \neq y_i} \min\{p_x(i), p_y(i)\}}.
\end{equation*}

\section{Equivalence of the Adversary Bounds}
In this section we prove the main theorem: 

\begin{theorem} \label{thm:main}
Let $f : S \to H$ be a partial Boolean function, where $S \subseteq G^n$.
Then
\begin{itemize}
\item $\fbs(f) \leq \CA_1(f) \leq \CA(f) = \WA(f),$
\item $\WA(f) = O(\KA(f))$,
\item $\KA(f) = \Theta(\MM(f)).$
\end{itemize}
Moreover, for total functions $f : G^n \to H$, we have
\begin{equation*}
\fbs(f) = \MM(f).
\end{equation*}
\end{theorem}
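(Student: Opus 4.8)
The plan is to prove the two inequalities $\fbs(f)\le\MM(f)$ (valid for every $f$) and $\MM(f)\le\fbs(f)$ (this is where totality is used) separately; the first is essentially a ``certificate barrier'' for $\MM$, and the second is the substantive half. Note that the chain $\fbs(f)\le\CA_1(f)\le\CA(f)=\WA(f)=O(\KA(f))=\Theta(\MM(f))$ only yields $\fbs(f)=O(\MM(f))$, so both directions of the exact equality need a direct argument.

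For $\fbs(f)\le\MM(f)$, fix an arbitrary family $\{p_x\}_{x\in S}$ of probability distributions over $[n]$; it suffices to find $x,y$ with $f(x)\ne f(y)$ and $\sum_{i:x_i\ne y_i}\min\{p_x(i),p_y(i)\}\le 1/\fbs(f)$. Pick $a\in S$ with $\fbs(f,a)=\fbs(f)$ and an optimal weighting $w$ of the sensitive blocks $\clB$ of $a$, so $\sum_{B\in\clB}w(B)=\fbs(f)$ and $\sum_{B\ni i}w(B)\le 1$ for each $i$. Averaging the block masses of $p_a$ against $w$,
\[
\sum_{B\in\clB}w(B)\sum_{i\in B}p_a(i)=\sum_{i}p_a(i)\sum_{B\ni i}w(B)\le 1 ,
\]
so some $B\in\clB$ with $w(B)>0$ has $\sum_{i\in B}p_a(i)\le 1/\fbs(f)$. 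Taking $y$ to be an input witnessing this sensitive block (so $f(y)\ne f(a)$ and $\{i:a_i\ne y_i\}=B$) and using $\min\{p_a(i),p_y(i)\}\le p_a(i)$ gives the claim.

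The reverse inequality $\MM(f)\le\fbs(f)$ for total $f$ rests on the following lemma: if $f$ is total, $f(x)\ne f(y)$, and $v_x$, $v_y$ are feasible for the $\FC(f,x)$ and $\FC(f,y)$ linear programs, then $\sum_{i:x_i\ne y_i}\min\{v_x(i),v_y(i)\}\ge 1$. Granting it, pick for each $x$ an optimal $v_x$, so $\sum_i v_x(i)=\FC(f,x)=\fbs(f,x)$, which lies in $[1,\fbs(f)]$ because $f$ is non-constant; set $p_x(i)=v_x(i)/\fbs(f,x)$, a probability distribution. For any $x,y$ with $f(x)\ne f(y)$, using $\fbs(f,x),\fbs(f,y)\le\fbs(f)$ and the lemma,
\[
\sum_{i:x_i\ne y_i}\min\{p_x(i),p_y(i)\}\ \ge\ \frac{1}{\fbs(f)}\sum_{i:x_i\ne y_i}\min\{v_x(i),v_y(i)\}\ \ge\ \frac{1}{\fbs(f)},
\]
so the family $\{p_x\}$ certifies $\MM(f)\le\fbs(f)$, and together with the first part, $\fbs(f)=\MM(f)$.

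It remains to prove the lemma, which I expect to be the only real obstacle. Write $D=\{i:x_i\ne y_i\}$ and split it into $A=\{i\in D:v_x(i)\le v_y(i)\}$ and $B=D\setminus A$, so that $\sum_{i\in D}\min\{v_x(i),v_y(i)\}=\sum_{i\in A}v_x(i)+\sum_{i\in B}v_y(i)$. Let $z$ be the input agreeing with $y$ on the coordinates in $A$ and with $x$ on all others; since $f$ is total, $z$ is in the domain, and one checks that $\{i:x_i\ne z_i\}=A$ and $\{i:y_i\ne z_i\}=B$. Because $f(x)\ne f(y)$, at least one of $f(z)\ne f(x)$ and $f(z)\ne f(y)$ holds: in the former case the feasibility constraint for $v_x$ applied to $z$ gives $\sum_{i\in A}v_x(i)\ge 1$, in the latter the constraint for $v_y$ applied to $z$ gives $\sum_{i\in B}v_y(i)\ge 1$, and in either case $\sum_{i\in D}\min\{v_x(i),v_y(i)\}\ge 1$. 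The crux is exactly this: the choice of the hybrid input $z$ together with the ``which of $v_x,v_y$ is smaller'' split of $D$; once that is in place the rest is bookkeeping. Totality is indispensable — for a partial function the hybrid $z$ need not lie in $S$, which is precisely why (as the paper shows later) $\fbs$ and $\MM$ can be far apart for partial functions.
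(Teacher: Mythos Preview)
Your argument is correct and is essentially the paper's proof. Both directions hinge on the same key lemma (Proposition~\ref{KulkarniTal}, due to Kulkarni and Tal): your hybrid-input construction of $z$ is precisely their proof, which the paper quotes as a black box, and your $\MM(f)\le\fbs(f)$ step---normalizing optimal $\FC$-weights to probability distributions---is the paper's argument run directly rather than through the reformulation of Lemma~\ref{cmmAlt}. The one cosmetic difference is that for $\fbs(f)\le\MM(f)$ you give a direct averaging argument against an arbitrary $\{p_x\}$, whereas the paper obtains it by observing that every $\MM$-feasible family (in the sense of Lemma~\ref{cmmAlt}) is automatically $\FC$-feasible; these are dual phrasings of the same inequality.
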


The part $\WA(f) = O(\KA(f))$ has been already proven in \cite{LM04}.

\subsection{Fractional Block Sensitivity and the Weighted Adversary Method}

First, we prove that fractional block sensitivity lower bounds the relational adversary bound for any partial function.

\begin{proposition}
Let $f : S \to H$ be a partial Boolean function, where $S \subseteq G^n$.
Then
\begin{equation*}
\fbs(f) \leq \CA_1(f).
\end{equation*}
\end{proposition}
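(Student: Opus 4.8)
The plan is to exhibit one rank-$1$ relational adversary, concentrated on the worst input for fractional block sensitivity, and show its value is at least $\fbs(f)$. Fix $x^\ast \in S$ with $\fbs(f,x^\ast)=\fbs(f)$ and write $b=f(x^\ast)$. Let $\clB$ be the set of sensitive blocks of $x^\ast$ and let $w=w_{x^\ast}$ be an optimal solution of the fractional block sensitivity linear program at $x^\ast$, so that $\sum_{B\in\clB}w(B)=\fbs(f)$ and $\sum_{B\in\clB:\, i\in B}w(B)\le 1$ for every $i\in[n]$. For each $B\in\clB$ with $w(B)>0$ I would fix a witness $y_B\in S$ with $f(y_B)\ne b$ and $B=\{i: x^\ast_i\ne (y_B)_i\}$; such a $y_B$ exists by the definition of a sensitive block, and distinct blocks yield distinct witnesses since $B$ is recovered from $y_B$.

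Next, define $u,v:S\to\mathbb R_{\ge 0}$ by $u(x^\ast)=1$ and $u(x)=0$ for $x\ne x^\ast$, and $v(y_B)=w(B)$ for each $B\in\clB$ with $w(B)>0$, with $v(y)=0$ otherwise. Then $X=\{x^\ast\}$ and $Y=\{y_B: w(B)>0\}$; since every $y_B$ has $f(y_B)\ne b=f(x^\ast)$, the matrix $R'(x,y)=u(x)v(y)$ vanishes whenever $f(x)=f(y)$, and $X\cap Y=\varnothing$, so $(u,v)$ is a feasible choice in the definition of $\CA_1(f)$. (Here $\fbs(f)\ge\bs(f)\ge 1$ since $f$ is non-constant, so $Y\ne\varnothing$ and some $w(B)>0$ on a nonempty block, hence an admissible triple exists.)

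It then remains to lower bound $\max\{\theta(x^\ast,i),\theta(y_B,i)\}$ over the admissible triples $(x^\ast,y_B,i)$, i.e.\ those with $w(B)>0$ and $x^\ast_i\ne (y_B)_i$; note such $i$ lie in $B$. By the simplified formula, $\theta(y_B,i)=\big(\sum_{x\in X}u(x)\big)\big/\big(\sum_{x\in X:\,x_i\ne (y_B)_i}u(x)\big)=1$, which is harmless because I will instead bound the maximum from below by $\theta(x^\ast,i)$. Using that $i$ lies in a block of positive weight,
\[
\theta(x^\ast,i)=\frac{\sum_{y\in Y}v(y)}{\sum_{y\in Y:\,x^\ast_i\ne y_i}v(y)}=\frac{\sum_{B'\in\clB}w(B')}{\sum_{B'\in\clB:\,i\in B'}w(B')}=\frac{\fbs(f)}{\sum_{B'\ni i}w(B')}\ \ge\ \fbs(f),
\]
the last inequality by the LP constraint $\sum_{B'\ni i}w(B')\le 1$ (and the denominator is positive, so $\theta(x^\ast,i)$ is defined). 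Hence the inner minimum is at least $\fbs(f)$, giving $\CA_1(f)\ge\fbs(f)$.

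I do not expect a serious obstacle here: the argument is essentially bookkeeping against the definitions. The only points that need care are verifying feasibility of the constructed $(u,v)$ (in particular $X\cap Y=\varnothing$ and the vanishing of $R'$ on equal-value pairs), choosing the witnesses $y_B$ consistently, and observing that the small value $\theta(y_B,i)=1$ is irrelevant because the adversary bound takes the maximum of the two $\theta$-values.
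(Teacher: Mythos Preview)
Your proposal is correct and follows essentially the same argument as the paper: both fix the input $x^\ast$ achieving $\fbs(f)$, pick one witness $y_B$ per sensitive block, assign the block weights as the $v$-values (with $u$ supported only on $x^\ast$), and then use the LP constraint $\sum_{B\ni i}w(B)\le 1$ to get $\theta(x^\ast,i)\ge\fbs(f)$ while noting $\theta(y_B,i)=1$. Your write-up is in fact slightly more careful than the paper's about minor feasibility checks (restricting to blocks with $w(B)>0$, distinctness of the $y_B$, non-emptiness of $Y$), but the underlying construction and computation are identical.
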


\begin{proof}
Let $x \in S$ be such that $\fbs(f, x) = \fbs(f)$ and denote $h = f(x)$.
Let $H' = H \setminus \{h\}$ and $S' = f^{-1}(H')$.

Let $\clB$ be the set of sensitive blocks of $x$.
Let $w : \clB \to [0, 1]$ be an optimal solution to the $\fbs(f,x)$ linear program, that is, $\sum_{B \in \clB} w(B) = \fbs(f, x)$.
For each $B \in \clB$, pick a single $y_B \in S'$ such that $B = \{i \mid x_i \neq y_i\}$.
Then define $R(x, y_B) := w(B)$ for all $B \in \clB$.
It is clear that $R$ has a corresponding rank 1 matrix $R'$, as it has only one row (corresponding to $x$) that is not all zeros.

Let $y \in S'$ be any input such that $R(x, y) > 0$.
Then for any $i \in [n]$ such that $x_i \neq y_i$,
\begin{equation*}
\theta(x, i) = \frac{\sum_{B \in \clB} w(B)}{\sum_{B \in \clB : i \in B} w(B)} = \frac{\fbs(f, x)}{\sum_{B \in \clB : i \in B} w(B)} \geq \fbs(f),
\end{equation*}
as $0 < \sum_{B \in \clB : i \in B} w(B) \leq 1$.
On the other hand, note that
\begin{equation*}
\theta(y, i) = \frac{w(B)}{w(B)} = 1,
\end{equation*}
where $B = \{i \mid x_i \neq y_i\}$.
Therefore, for this $R$,
\begin{equation*}
\min_{x, y \in S, i \in [n] : \atop R(x, y) > 0, x_i \neq y_i} \max\{\theta(x, i), \theta(y, i)\} \geq \min_{y \in S', i \in [n] : \atop R(x, y) > 0, x_i \neq y_i} \max\{\fbs(f), 1\} = \fbs(f),
\end{equation*}
and the claim follows.
\end{proof}

As mentioned in \cite{LM04}, $\CA(f)$ is a weaker version of $\WA(f)$.
We show that in fact they are exactly equal to each other:

\begin{proposition}
Let $f : S \to H$ be a partial Boolean function, where $S \subseteq G^n$.
Then
\begin{equation*}
\CA(f) = \WA(f).
\end{equation*}
\end{proposition}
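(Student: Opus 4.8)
The plan is to prove the equality via two inequalities, $\CA(f)\le\WA(f)$ and $\WA(f)\le\CA(f)$, each obtained by converting a feasible solution of one program into a feasible solution of the other with at least as good an objective value and then optimizing. The key structural observation is that the auxiliary weight $w'$ in the weighted adversary bound enters only through the quantities $v(x,i)=\sum_y w'(x,y,i)$, which appear in the \emph{denominators} $wt(x)/v(x,i)$; since the outer problem \emph{maximizes} a minimum of such ratios, there is never an advantage to taking $w'$ larger than it must be, and its constraints are all pointwise --- $w'(x,y,i)\ge w(x,y)$ when $x_i\neq y_i$, and $w'(x,y,i)=0$ when $x_i=y_i$ or $f(x)=f(y)$. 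Hence the componentwise-smallest feasible choice is $w'(x,y,i)=w(x,y)\cdot\mathbbm 1[x_i\neq y_i]$, for which $v(x,i)=\sum_{y:x_i\neq y_i}w(x,y)$, precisely the denominator appearing in $\theta(x,i)$ when $R=w$.

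For $\CA(f)\le\WA(f)$: given any admissible $R$ for the relational bound (symmetric, vanishing when $f(x)=f(y)$), set $w:=R$ and $w'(x,y,i):=R(x,y)$ if $x_i\neq y_i$ and $0$ otherwise. I would then check that $(w,w')$ is a valid weight scheme: $w$ is admissible immediately; $w'$ is nonnegative, vanishes when $x_i=y_i$ by construction and when $f(x)=f(y)$ since then $R(x,y)=0$, and satisfies $w'(x,y,i)=R(x,y)=w(x,y)$ as well as $w'(y,x,i)=R(y,x)=w(x,y)$ for $x_i\neq y_i$ by symmetry of $R$. Now $wt(x)=\sum_y R(x,y)$ and $v(x,i)=\sum_{y:x_i\neq y_i}R(x,y)$, so $wt(x)/v(x,i)=\theta(x,i)$; moreover the triples $(x,y,i)$ with $w(x,y)\neq 0,\ x_i\neq y_i$ are exactly those with $R(x,y)>0,\ x_i\neq y_i$, so the two minimizations range over the same set. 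Thus this $(w,w')$ achieves objective value exactly the $\CA$-objective of $R$, and maximizing over $R$ gives $\WA(f)\ge\CA(f)$.

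For $\WA(f)\le\CA(f)$: given admissible weight schemes $w,w'$, set $R:=w$, which is symmetric and vanishes when $f(x)=f(y)$, hence admissible for $\CA(f)$. Since $w'(x,y,i)=0$ when $x_i=y_i$ and $w'(x,y,i)\ge w(x,y)$ when $x_i\neq y_i$, we have $v(x,i)=\sum_{y:x_i\neq y_i}w'(x,y,i)\ge\sum_{y:x_i\neq y_i}w(x,y)$, hence $wt(x)/v(x,i)\le\theta(x,i)$ for every relevant $(x,i)$, with the index sets again coinciding. Taking the pairwise maximum and then the minimum over the common set of triples shows the $\WA$-objective of $(w,w')$ is at most the $\CA$-objective of $R=w$, and maximizing over $(w,w')$ yields $\WA(f)\le\CA(f)$.

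I do not anticipate a real obstacle: the content is entirely the bookkeeping of the feasibility conditions on $w'$ and the observation that the two definitions minimize over literally the same family of triples, so that the objective comparisons pass cleanly through the inner $\min$ and the outer $\max$. The one point worth stating explicitly is the remark above --- that the extra parameter $w'$ of the weighted adversary bound can only shrink the ratios and therefore cannot help --- which is exactly what makes $\WA$ collapse to $\CA$ rather than being strictly larger.
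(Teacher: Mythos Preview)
Your proposal is correct and follows essentially the same approach as the paper: convert $R$ to $(w,w')$ with $w'=w$ on the relevant triples to get $\CA(f)\le\WA(f)$, and convert $(w,w')$ back to $R:=w$ using $w'(x,y,i)\ge w(x,y)$ to get $\WA(f)\le\CA(f)$. Your explicit remark that the auxiliary weight $w'$ is redundant in the classical case is exactly the observation the paper highlights immediately after its proof.
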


\begin{proof}
\begin{itemize}
\item First we show that $\CA(f) \leq \WA(f)$.

Suppose that $R$ is the function for which the relational bound achieves maximum value.
Let $w(x,y) = w(y,x) = w(x,y,i) = w(y,x,i) = R(x,y)$ for any $x, y, i$ such that $f(x) \neq f(y)$ and $x_i \neq y_i$.
This pair of weight schemes satisfies the conditions of the weighted adversary bound.
The value of the latter with $w, w'$ is equal to $\CA(f)$.
As the weighted adversary bound is a maximization measure, $\CA(f) \leq \WA(f)$.

\item Now we show that $\CA(f) \geq \WA(f)$.

Let $w, w'$ be optimal weight schemes for the weighted adversary bound.
Let $R(x,y) = w(x,y)$ for any $x, y \in S$ such that $f(x) \neq f(y)$.
Let $S' = f^{-1}(H \setminus f(x))$.
Then
\begin{equation*}
\theta(x,i) = \frac{\sum_{y \in S'} R(x, y)}{\sum_{y \in S' : x_i \neq y_i} R(x, y)} = \frac{\sum_{y \in S'} w(x,y)}{\sum_{y \in S' : x_i \neq y_i} w(x,y)} \geq \frac{\sum_{y \in S'} w(x,y)}{\sum_{y \in S' : x_i \neq y_i} w'(x,y,i)} = \frac{wt(x)}{v(x,i)},
\end{equation*}
as $w'(x,y,i) \geq w(x,y)$ by the properties of $w, w'$.
Similarly, $\theta(y,i) \geq \frac{wt(y)}{v(y,i)}$.
Therefore, for any $x, y \in S$ and $i \in [n]$ such that $f(x) \neq f(y)$ and $x_i \neq y_i$, we have
\begin{equation*}
\max\{\theta(x,i),\theta(y,i)\} \geq \max\left\{ \frac{wt(x)}{v(x,i)}, \frac{wt(y)}{v(y,i)}\right\}.
\end{equation*}
As the relational adversary bound is also a maximization measure, $\CA(f) \geq \WA(f)$. \qedhere
\end{itemize}
\end{proof}

The proof of this proposition also shows why $\CA(f)$ and $\WA(f)$ are equivalent --- the weight function $w'$ is redundant in the classical case (in contrast to the quantum setting).

\subsection{Kolmogorov Complexity and Minimax over Distributions}

In this section we prove the equivalence between the mimimax over probability distributions and Kolmogorov complexity adversary bound.
It has been shown in the proof of the main theorem of  \cite{LM04} that $\MM(f) = \Omega(\KA(f))$.
Here we show the other direction using a well-known result from coding theory.

\begin{proposition}[Kraft's inequality]
Let $S$ be any prefix-free set of finite strings.
Then \begin{equation*}\sum_{x \in S} 2^{-|x|} \leq 1.\end{equation*}
\end{proposition}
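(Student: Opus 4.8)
The plan is to prove Kraft's inequality by the classical interval-packing / prefix-extension argument, handling the (possibly infinite) set $S$ by first reducing to finite subsets. Throughout, the strings in $S$ are binary, as in the definition of $K(x\mid y)$ above.

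First I would reduce to the case where $S$ is finite. Since every term $2^{-|x|}$ is non-negative, the sum $\sum_{x \in S} 2^{-|x|}$ equals the supremum of the partial sums $\sum_{x \in S'} 2^{-|x|}$ over finite subsets $S' \subseteq S$; moreover every such $S'$ is itself prefix-free, so it suffices to prove $\sum_{x \in S'} 2^{-|x|} \leq 1$ for finite prefix-free $S'$. Assume then that $S$ is finite and set $\ell = \max_{x \in S} |x|$. To each $x \in S$ associate the set $E_x \subseteq \{0,1\}^{\ell}$ of all binary strings of length $\ell$ that have $x$ as a prefix; clearly $|E_x| = 2^{\ell - |x|}$. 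The key step is that the family $\{E_x\}_{x \in S}$ is pairwise disjoint: if some $z$ lay in $E_x \cap E_y$ with $x \neq y$, then $z$ would have both $x$ and $y$ as prefixes, and comparing their lengths shows one of $x, y$ is a prefix of the other, contradicting prefix-freeness. Since the $E_x$ are disjoint subsets of $\{0,1\}^{\ell}$ and $|\{0,1\}^{\ell}| = 2^{\ell}$, we get $\sum_{x \in S} 2^{\ell - |x|} = \sum_{x \in S} |E_x| \leq 2^{\ell}$, and dividing by $2^{\ell}$ yields $\sum_{x \in S} 2^{-|x|} \leq 1$, as desired.

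The only real subtlety — the "main obstacle", such as it is for a standard fact — is making the passage to infinite $S$ rigorous, i.e., justifying that the value of the non-negative series is the supremum of its finite partial sums and that each finite partial sum is a sum over a prefix-free set. An alternative that avoids this reduction altogether, and which I might use instead for a cleaner exposition, is to map each $x \in S$ to the half-open dyadic subinterval of $[0,1)$ consisting of all reals whose binary expansion begins with $x$; this interval has length $2^{-|x|}$, prefix-freeness makes the intervals assigned to distinct strings pairwise disjoint (two dyadic intervals overlap only when one of the defining strings is a prefix of the other), and disjoint subintervals of $[0,1)$ have total length at most $1$ regardless of whether there are finitely or infinitely many of them. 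Either way the conclusion $\sum_{x \in S} 2^{-|x|} \leq 1$ follows, and this is then plugged in (with the roles of programs/strings as in the definition of $\KA(f)$) to obtain the remaining direction $\KA(f) = O(\MM(f))$.
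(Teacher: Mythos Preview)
Your argument is correct: the reduction to finite $S$ via suprema of partial sums, followed by the counting of length-$\ell$ extensions (or, alternatively, the dyadic-interval packing), is the standard and fully rigorous proof of Kraft's inequality, and your handling of the only delicate point (infinite $S$) is fine.

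Note, however, that the paper does not actually prove this proposition at all: it is quoted as a classical result from coding theory and used as a black box in the subsequent bound $\KA(f)\geq \MM(f)$. So there is no ``paper's own proof'' to compare against; your write-up simply supplies a proof where the paper relies on the reader's background. The final sentence of your proposal also slightly misstates the direction being established---the paper uses Kraft's inequality to show $\KA(f)\geq \MM(f)$ (equivalently $\MM(f)=O(\KA(f))$), not $\KA(f)=O(\MM(f))$; the latter direction is attributed to \cite{LM04}.
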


\begin{proposition}
Let $f : S \to H$ be a partial Boolean function, where $S \subseteq G^n$.
Then
\begin{equation*}
\KA(f) \geq \MM(f).
\end{equation*}
\end{proposition}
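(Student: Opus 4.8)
The plan is to convert an (almost) optimal string $\sigma$ for $\KA(f)$ into a family of probability distributions $\{p_x\}_{x\in S}$ that is feasible for $\MM(f)$ and whose objective value does not exceed the $\KA$-value of $\sigma$. The natural candidate is to make $p_x(i)$ at least $2^{-K(i\mid x,\sigma)}$; the one thing that must be verified is that $\sum_{i\in[n]} 2^{-K(i\mid x,\sigma)}\le 1$, so that such a $p_x$ exists as a genuine probability distribution over $[n]$. This is exactly where Kraft's inequality enters.

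First I would fix an arbitrary $\sigma\in\{0,1\}^*$ and $x\in S$, and for each $i\in[n]$ let $P_i\in\mathcal S$ be a shortest program of the universal machine $M$ that prints $i$ on conditioning string (the encoding of) $(x,\sigma)$, so $|P_i|=K(i\mid x,\sigma)$. Since $M$ with a fixed conditioning string is a function of the program, distinct indices $i\neq j$ force $P_i\neq P_j$; hence $\{P_i : i\in[n]\}$ is a subset of the prefix-free set $\mathcal S$ and is therefore itself prefix-free. Kraft's inequality then gives $\sum_{i\in[n]} 2^{-K(i\mid x,\sigma)}=\sum_{i\in[n]} 2^{-|P_i|}\le \sum_{P\in\mathcal S}2^{-|P|}\le 1$. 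I would then define $p_x$ to be any probability distribution on $[n]$ with $p_x(i)\ge 2^{-K(i\mid x,\sigma)}$ for all $i$ (possible by the inequality just proved: assign the leftover mass $1-\sum_i 2^{-K(i\mid x,\sigma)}$ arbitrarily).

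With this family in hand, for every pair $x,y$ with $f(x)\neq f(y)$ and every $i$ with $x_i\neq y_i$ we have $\min\{p_x(i),p_y(i)\}\ge\min\{2^{-K(i\mid x,\sigma)},2^{-K(i\mid y,\sigma)}\}$, hence $\sum_{i:x_i\neq y_i}\min\{p_x(i),p_y(i)\}\ge\sum_{i:x_i\neq y_i}\min\{2^{-K(i\mid x,\sigma)},2^{-K(i\mid y,\sigma)}\}$. Taking reciprocals and then the maximum over all such pairs shows that the $\MM$-objective evaluated at $p$ is at most the $\KA$-objective evaluated at $\sigma$; since $\MM(f)$ is the minimum over all feasible families, $\MM(f)$ is bounded by this quantity for every choice of $\sigma$, and taking the infimum over $\sigma$ yields $\MM(f)\le\KA(f)$.

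The argument is essentially routine once Kraft's inequality is available; the only step requiring genuine care is the reduction to a prefix-free set, i.e.\ the observation that shortest programs for distinct outputs under the same conditioning string are distinct and thus inherit the prefix-free property of $\mathcal S$. A minor additional point is that the minimum defining $\KA(f)$ may in fact be an infimum over infinitely many strings $\sigma$; this is handled cleanly by proving the inequality for every fixed $\sigma$ before optimizing, as above.
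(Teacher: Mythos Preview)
Your proof is correct and follows essentially the same approach as the paper: use Kraft's inequality to bound $\sum_{i\in[n]}2^{-K(i\mid x,\sigma)}\le 1$, build probability distributions $p_x$ that pointwise dominate $2^{-K(i\mid x,\sigma)}$, and conclude. The only cosmetic difference is that the paper normalizes by setting $p_x(i)=2^{-K(i\mid x,\sigma)}/s_x$ with $s_x\le 1$, whereas you distribute the leftover mass arbitrarily; both yield the same pointwise domination, and your handling of the Kraft step (distinct shortest programs for distinct $i$) and of the infimum over $\sigma$ is arguably a touch more careful than the paper's.
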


\begin{proof}
Let $\sigma$ be the binary string for which $\KA(f)$ achieves the smallest value.
Define the set of probability distributions $\{p_x\}_{x \in S}$ on $[n]$ as follows.
Let $s_x = \sum_{i \in [n]} 2^{-K(i \mid x, \sigma)}$ and $p_x(i) = 2^{-K(i \mid x, \sigma)}/s_x$.
The set of programs that print out $i \in [n]$, given $x$ and $\sigma$, is prefix-free (by the definition of $\mathcal S$), as the information given to all programs is the same.
Thus by Kraft's inequality, we have $s_x \leq 1$.

Examine the value of the minimax bound with this set of probability distributions.
For any $x, y \in S$ and $i \in [n]$, we have
\begin{equation*}
\min\{p_x(i), p_y(i)\} = \min\left\{ \frac{2^{-K(i \mid x, \sigma)}}{s_x}, \frac{2^{-K(i \mid y, \sigma)}}{s_y}\right\} \geq \min\{2^{-K(i|x, \sigma)},2^{-K(i|y, \sigma)}\}.
\end{equation*}
Therefore, $\KA(f) = \Theta(\MM(f))$.
\end{proof}

\subsection{Fractional Block Sensitivity and Minimax over Distributions}

Now we proceed to prove that for total functions, fractional block sensitivity is equal to the minimax over probability distributions.
The latter has an equivalent form of the following program.
\begin{lemma} \label{cmmAlt} For any partial Boolean function $f : S \to H$, where $S \subseteq G^n$,
\begin{equation*}
\MM(f) = \min_v \max_{x \in S} \sum_{i \in [n]} v_x(i) \hspace{1cm} \text{ s.t. } \quad \forall y \in S \text{ s.t. } f(x) \neq f(y): \sum_{i : x_i \neq y_i} \min\{v_x(i), v_y(i)\} \geq 1,
\end{equation*}
where $\{v_x\}_{x \in S}$ is any set of weight functions $v_x : [n] \to \mathbb R_{\geq 0}$.
\end{lemma}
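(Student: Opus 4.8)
The plan is to prove the equality by two inequalities, each obtained by \emph{rescaling} a feasible solution of one program into a feasible solution of the other; write $M'(f)$ for the optimal value of the program displayed in the statement.

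For the inequality $\MM(f)\le M'(f)$, I would start from an arbitrary feasible family $\{v_x\}_{x\in S}$ of the displayed program, set $\mu=\max_{x\in S}\sum_{i\in[n]}v_x(i)$ and $s_x=\sum_{i\in[n]}v_x(i)$, and normalize to probability distributions $p_x(i)=v_x(i)/s_x$. This is well defined because $f$ is not constant, so every $x$ appears in at least one constraint $\sum_{i:x_i\neq y_i}\min\{v_x(i),v_y(i)\}\ge 1$, which forces $s_x>0$ (and hence $\mu>0$). The point is that $s_x\le\mu$, so $p_x(i)\ge v_x(i)/\mu$ for every $i$, and therefore, for all $x\neq y$ with $f(x)\neq f(y)$,
\[
\sum_{i:x_i\neq y_i}\min\{p_x(i),p_y(i)\}\ \ge\ \frac1\mu\sum_{i:x_i\neq y_i}\min\{v_x(i),v_y(i)\}\ \ge\ \frac1\mu .
\]
Thus $\{p_x\}$ certifies $\MM(f)\le\mu$; taking the infimum over feasible $\{v_x\}$ gives $\MM(f)\le M'(f)$.

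For the reverse inequality $M'(f)\le\MM(f)$, I would conversely take any family of probability distributions $\{p_x\}_{x\in S}$, let $\lambda$ be the value it attains in the $\MM$ program (so $\sum_{i:x_i\neq y_i}\min\{p_x(i),p_y(i)\}\ge 1/\lambda$ for all relevant pairs), and set $v_x(i)=\lambda\,p_x(i)$. Then $\sum_{i:x_i\neq y_i}\min\{v_x(i),v_y(i)\}\ge 1$, so $\{v_x\}$ is feasible for the displayed program, while $\max_{x\in S}\sum_{i\in[n]}v_x(i)=\lambda$ since each $p_x$ sums to $1$. Hence $M'(f)\le\lambda$, and taking the infimum over $\{p_x\}$ yields $M'(f)\le\MM(f)$. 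Combining the two bounds gives $\MM(f)=M'(f)$.

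Since the whole argument is just a rescaling, there is no real obstacle; the only point that genuinely needs care is that the normalization in the first direction is both well defined and non-expansive, which is precisely why one uses $s_x>0$ (non-constancy of $f$) and the choice $\mu=\max_x s_x$ (so that dividing each $v_x(i)$ by $s_x\le\mu$ can only inflate it). One should also note in passing that, as $S$ is finite, treating these optimal values as minima (rather than infima) is unproblematic.
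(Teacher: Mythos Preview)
Your proposal is correct and is essentially the same rescaling argument as the paper's proof: both directions are obtained by scaling between probability distributions $p_x$ and nonnegative weights $v_x$, using $v_x(i)=\MM(f)\cdot p_x(i)$ in one direction and $p_x(i)=v_x(i)/s_x$ together with $s_x\le\mu$ in the other. The only cosmetic differences are that you present the two inequalities in the opposite order and add the (welcome) remark that $s_x>0$ follows from $f$ being non-constant.
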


\begin{proof} Denote by $\mu$ the optimal value of the given program.
\begin{itemize}
\item First we prove that $\mu \leq \MM(f)$.

Construct a set of weight functions $\{v_x\}_{x \in S}$ by $v_x(i) := p_x(i) \cdot \MM(f)$, where $\{p_x\}_{x \in S}$ is an optimal set of probability distributions for the minimax bound.
Then for any $x, y$ such that $f(x) \neq f(y)$,
\begin{equation*}
\sum_{i : x_i \neq y_i} \min\{v_x(i), v_y(i)\} = \MM(f) \cdot \sum_{i : x_i \neq y_i} \min\{p_x(i), p_y(i)\} \geq \MM(f) \cdot \frac{1}{\MM(f)} = 1.
\end{equation*}
On the other hand, the value of this solution is given by
\begin{equation*}
\max_{x \in S} \sum_{i \in [n]} v_x(i) = \max_{x \in S} \MM(f) \cdot \sum_{i \in [n]} p_x(i) = \MM(f).
\end{equation*}

\item Now we prove that $\mu \geq \MM(f)$.

Let $\{v_x\}_{x\in S}$ be an optimal solution for the given program.
Set $s_x = \sum_{i \in [n]} v_x(i)$.
Construct a set of probability distributions $\{p_x\}_{x\in S}$ by $p_x(i) = v_x(i)/s_x$.
Then for any $x, y$ such that $f(x) \neq f(y)$, we have
\begin{equation*}
\sum_{i : x_i \neq y_i} \min\{p_x(i), p_y(i)\} = \sum_{i : x_i \neq y_i} \min\left\{\frac{v_x(i)}{s_x}, \frac{v_y(i)}{s_y}\right\} \geq \frac{1}{\mu} \cdot \sum_{i : x_i \neq y_i} \min\left\{v_x(i), v_y(i)\right\} \geq \frac{1}{\mu}.
\end{equation*}
Therefore, $\MM(f) \leq \mu$. \qedhere
\end{itemize}
\end{proof}

In this case we prove that for total functions the minimax over probability distributions is equal to the fractional certificate complexity $\FC(f)$.
The result follows since $\FC(f) = \fbs(f)$.
The proof of this claim is almost immediate in light of the following ``fractional certificate intersection'' lemma by Kulkarni and Tal:
\begin{proposition}[\cite{KT16}, Lemma 6.2] \label{KulkarniTal}
Let $f : G^n \to H$ be a total function\footnote{Kulkarni and Tal prove the lemma for Boolean functions, but it is straightforward to check that their proof also works for functions with arbitrary input and output alphabets.} and $\{v_x\}_{x \in G^n}$ be a feasible solution for the $\FC(f)$ linear program.
Then for any two inputs $x, y \in G^n$ such that $f(x) \neq f(y)$, we have
\begin{equation*}
\sum_{i : x_i \neq y_i} \min\{v_x(i), v_y(i)\} \geq 1.
\end{equation*}
\end{proposition}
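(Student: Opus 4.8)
The plan is to prove the inequality directly by a hybrid (intermediate-input) argument that crucially exploits totality of $f$. Fix two inputs $x, y \in G^n$ with $f(x) \neq f(y)$ and let $D = \{i \in [n] : x_i \neq y_i\}$ be their set of disagreeing coordinates. The quantity to be lower bounded, $\sum_{i \in D} \min\{v_x(i), v_y(i)\}$, naturally splits according to which of the two weights is smaller, so the first step is to partition $D$ into $D_x = \{i \in D : v_x(i) \leq v_y(i)\}$ and $D_y = D \setminus D_x$, giving
\begin{equation*}
\sum_{i \in D} \min\{v_x(i), v_y(i)\} = \sum_{i \in D_x} v_x(i) + \sum_{i \in D_y} v_y(i).
\end{equation*}

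Next I would introduce a single hybrid input $z \in G^n$ that agrees with $y$ on $D_x$, agrees with $x$ on $D_y$, and agrees with both $x$ and $y$ on every coordinate outside $D$. A short coordinatewise check shows that $z$ disagrees with $x$ exactly on $D_x$ and disagrees with $y$ exactly on $D_y$, that is, $\{i : x_i \neq z_i\} = D_x$ and $\{i : y_i \neq z_i\} = D_y$. This is the one place where totality enters: since $f$ is defined on all of $G^n$, the value $f(z)$ exists, whereas for a partial function $z$ need not lie in the domain --- which is precisely why the lemma can fail in the partial case.

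The decisive observation is then that $f(z)$ cannot equal both $f(x)$ and $f(y)$, because these differ, so at least one of $f(z) \neq f(x)$ or $f(z) \neq f(y)$ holds. If $f(z) \neq f(x)$, then the feasibility constraint of the $\FC(f)$ program for $v_x$ applied to the competitor $z$ reads $\sum_{i : x_i \neq z_i} v_x(i) \geq 1$, i.e. $\sum_{i \in D_x} v_x(i) \geq 1$; since all weights are nonnegative, this already forces the full sum above to be at least $1$. Symmetrically, if $f(z) \neq f(y)$, applying the feasibility constraint for $v_y$ to $z$ yields $\sum_{i \in D_y} v_y(i) \geq 1$, again bounding the full sum below by $1$. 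In either case the claimed inequality follows.

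I anticipate no serious obstacle: the only subtlety is choosing the hybrid $z$ so that its disagreement sets with $x$ and with $y$ coincide exactly with the two halves $D_x$ and $D_y$ of the partition, which is what makes the relevant $\FC$ constraint reproduce precisely one of the two partial sums. Verifying $\{i : x_i \neq z_i\} = D_x$ and $\{i : y_i \neq z_i\} = D_y$ is routine coordinatewise bookkeeping, and nonnegativity of the weights makes the passage from a single partial sum to the whole sum immediate, so the argument reduces to the case split above.
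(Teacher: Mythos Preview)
Your proof is correct and is precisely the standard hybrid argument of Kulkarni and Tal. Note, however, that the paper does not give its own proof of this proposition; it merely cites \cite{KT16}, Lemma~6.2, and remarks in a footnote that the Boolean proof there extends verbatim to arbitrary finite alphabets. Your write-up is exactly that extension, so there is nothing to compare: you have reproduced the cited argument faithfully.
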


Let $f$ be a total function.
Suppose that $\{v_x\}_{x \in G^n}$ is a feasible solution for the $\MM(f)$ program.
Then for any $x, y \in G^n$ such that $f(x) \neq f(y)$,
\begin{equation*}
\sum_{i : x_i \neq y_i} v_x(i) \geq \sum_{i : x_i \neq y_i} \min\{v_x(i), v_y(i)\} \geq 1.
\end{equation*}
Hence this is also a feasible solution for the $\FC(f)$ linear program.
On the other hand, if $\{v_x\}_{x \in G^n}$ is a feasible solution for $\FC(f)$ linear program, then it is also a feasible solution for the $\MM(f)$ program by Proposition \ref{KulkarniTal}.
Therefore, $\MM(f) = \FC(f)$.

\section{Separations for Partial Functions}

\subsection{Fractional Block Sensitivity vs. Adversary Bounds}

Here we show an example of a partial function that provides an unbounded separation between the adversary measures and fractional block sensitivity.

\begin{theorem}
There exists a partial Boolean function $f : S \to \{0, 1\}$, where $S \subseteq \{0, 1\}^n$, such that
$\fbs(f) = O(1)$ and $\CA_1(f), \CA(f), \WA(f), \KA(f), \MM(f) = \Omega(n)$.
\end{theorem}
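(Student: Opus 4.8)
The plan is to construct a ``promise equality'' function that is hard for the adversary yet has a trivially cheap fractional certificate at every input. Set $n = 2m$ and label the coordinates $a_1,\dots,a_m,b_1,\dots,b_m$. Let the domain $S\subseteq\{0,1\}^n$ consist of all strings with exactly one $1$ among $a_1,\dots,a_m$ and exactly one $1$ among $b_1,\dots,b_m$; such a string encodes a pair $(i,j)\in[m]^2$, where $a_i,b_j$ are its two \emph{active} coordinates. Define $f$ on $S$ by $f(i,j)=1$ iff $i=j$. Intuitively this asks to compare two indices given in unary, which should need $\Omega(n)$ queries, while any single input is ``pinned down'' by only its two active bits.

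First I would bound the fractional block sensitivity. Since $\fbs(f)=\FC(f)$, it is enough to produce for each input $x=(i,j)$ a feasible solution of value $2$ for the $\FC(f,x)$ program: take $v_x$ to be the indicator of $\{a_i,b_j\}$. The only thing to verify is that every $y\in S$ with $f(y)\ne f(x)$ differs from $x$ on $a_i$ or on $b_j$, which is immediate: $f(y)\ne f(x)$ forces the pair encoded by $y$ to differ from $(i,j)$ in its first or second component, and in the former case $y$ has a $0$ at $a_i$, in the latter a $0$ at $b_j$, whereas $x$ has a $1$ there. Hence $\FC(f,x)\le 2$ for every $x$, so $\fbs(f)\le 2 = O(1)$.

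Next I would lower-bound $\CA_1(f)$ using Proposition~\ref{caa-prop} with the partition $A=\{0\}$, $B=\{1\}$, choosing $p$ uniform on $f^{-1}(0)$ and $q$ uniform on $f^{-1}(1)=\{(i,i):i\in[m]\}$. By symmetry, for every coordinate $c$ one has $\Pr_{x\sim p}[x_c=1]=\Pr_{y\sim q}[y_c=1]=1/m$, hence for every $c$ and every $g_1\ne g_2$ in $\{0,1\}$,
\[
\min\bigl\{\Pr_{x\sim p}[x_c\ne g_1],\ \Pr_{y\sim q}[y_c\ne g_2]\bigr\}=\min\{1/m,\,1-1/m\}=1/m,
\]
so every term in the $\CA_1$ minimum equals $m$ and $\CA_1(f)\ge m=n/2=\Omega(n)$. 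Finally I would invoke Theorem~\ref{thm:main}: the chain $\fbs(f)\le\CA_1(f)\le\CA(f)=\WA(f)$, $\WA(f)=O(\KA(f))$, $\KA(f)=\Theta(\MM(f))$ turns the single estimate $\CA_1(f)=\Omega(n)$ into $\CA(f),\WA(f),\KA(f),\MM(f)=\Omega(n)$, which together with $\fbs(f)=O(1)$ is the statement.

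The one genuinely delicate point is the design of the function, i.e.\ reconciling the two demands. A small $\fbs$ means every input has an $O(1)$-size fractional certificate; but if those certificates were all supported on one fixed small set of coordinates, querying that set would solve $f$ cheaply and collapse every adversary bound — this is precisely why a $\{0^n\}$-versus-low-weight ``restricted $\OR$'' gadget fails. The two-one-hot-halves construction sidesteps this: the certifying coordinates $\{a_i,b_j\}$ move with the input, so no fixed small query set helps, yet each individual input still has a size-$2$ certificate. Once the function is fixed, the $\fbs$ bound is a one-line case check and the $\CA_1$ bound a symmetry calculation, so I do not expect further obstacles.
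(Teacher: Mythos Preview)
Your argument is correct. Both the $\fbs(f)\le 2$ bound (via the two active coordinates as a fractional certificate) and the $\CA_1(f)\ge n/2$ bound (via uniform $p,q$ in Proposition~\ref{caa-prop}) check out, and the appeal to Theorem~\ref{thm:main} correctly propagates the $\Omega(n)$ to all the other adversary measures.

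The paper takes a slightly different route: it uses the function $\GTH_n$ whose domain is the set of Hamming-weight-$1$ strings and whose value records whether the unique $1$ lies in the right half. There $\C(\GTH_n)=1$ (the single $1$-bit is already a certificate), and the $\CA_1$ lower bound is obtained directly from the definition by taking $R\equiv 1$ and computing $\theta(x,i),\theta(y,i)$. Your two-one-hot-halves construction is essentially $\GTH$ doubled into an equality test; it costs you a factor of $2$ in the certificate size but buys nothing extra, since the mechanism is the same---the certifying bit(s) move with the input, so no small fixed query set helps. The paper's example is marginally simpler (one half instead of two, $\C=1$ rather than $\FC\le 2$), while your use of Proposition~\ref{caa-prop} in place of a direct $\theta$-computation is arguably cleaner. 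Either approach proves the theorem.
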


\begin{proof}
Let $n$ be an even number and $S = \{x \in \{0, 1\}^n \mid |x| = 1\}$ be the set of bit strings of Hamming weight 1.
Define the ``greater than half'' function $\GTH_n : S \to \{0, 1\}$ to be 1 iff $x_i = 1$ for $i > n/2$.

For the first part, the certificate complexity is constant $\C(\GTH_n) = 1$.
To certify the value of greater than half, it is enough to certify the position of the unique $i$ such that $x_i = 1$.
The claim follows, as $\C(f) \geq \fbs(f)$ for any $f$.

For the second part, by Theorem \ref{thm:main}, it suffices to show that $\CA_1(\GTH_n) = \Omega(n)$.
Let $X = f^{-1}(0)$ and $Y = f^{-1}(1)$.
Let $R(x,y) = 1$ for all $x \in X, y \in Y$.
Suppose that $x \in X, y \in Y, i \in [n]$ are such that $x_i = 1$ (and thus $y_i = 0$).
Then
\begin{align*}
\theta(x, i) &= \frac{\sum_{y^* \in Y} R(x, y^*)}{\sum_{y^* \in Y : x_i \neq y^*_i} R(x, y^*)} = \frac{n/2}{n/2} = 1,\\
\theta(y, i) &= \frac{\sum_{x^* \in X} R(x^*, y)}{\sum_{x^* \in X : x^*_i \neq y_i} R(x^*, y)} = \frac{n/2}{1} = n/2.
\end{align*}
Therefore, $\max\{\theta(x,i), \theta(y,i)\} = n/2$.
Similarly, if $i$ is such an index that $y_i = 1$ and $x_i = 0$, we also have $\max\{\theta(x,i), \theta(y,i)\} = n/2$.
Also note that $R$ has a corresponding rank 1 matrix $R'$, hence $\CA_1(f) \geq n/2 = \Omega(n)$.
\end{proof}

We note that a similar function was used to prove lower bounds on the problem of inverting a permutation \cite{Amb00, Aar06}.
More specifically, we are given a permutation $\sigma(1), \ldots, \sigma(n)$, and the function is 0 if $\sigma^{-1}(1) \leq n/2$ and 1 otherwise.
With a single query, one can find the value of $\sigma(i)$ for any $i$.
By construction, a lower bound on $\GTH_n$ also gives a lower bound on computing this function.

\subsection{Relational Adversary vs. Kolmogorov Complexity Bound}

Here we show that, for a variant of the ordered search problem, the Kolmogorov complexity bound gives a tight logarithmic lower bound, while the relational adversary gives only a constant value lower bound.

\begin{theorem}
There exists a partial Boolean function $f : S \to \{0, 1\}$, where $S \subseteq \{0, 1\}^n$, such that
$\CA_1(f), \CA(f), \WA(f) = O(1)$ and $\KA(f), \MM(f) = \Omega(\log n)$.
\end{theorem}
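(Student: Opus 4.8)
The example is the decision version of ordered search: let $n$ be a power of two, let $S=\{\,y_k:=0^k1^{\,n-k}\mid 0\le k\le n\,\}\subseteq\{0,1\}^n$, and set $f(y_k)=k\bmod 2$. Deciding the parity of the threshold is essentially as hard as finding it, so $\R(f)=\Theta(\log n)$; the content of the theorem is that $\MM$ (and hence $\KA$) attains this, whereas the relational and weighted adversary bounds collapse to $O(1)$ here. The structural fact used throughout is that for $j<k$ the inputs $y_j$ and $y_k$ differ exactly on the contiguous block $(j,k]$ of coordinates.

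\textbf{Lower bound $\KA(f),\MM(f)=\Omega(\log n)$.} Since $\KA(f)\ge\MM(f)$ by the proposition above, it suffices to bound $\MM(f)$; for this I would use the LP reformulation of Lemma~\ref{cmmAlt} and show that any family of weight functions $\{v_{y_k}\}$ with $\sum_{i\in(j,k]}\min\{v_{y_j}(i),v_{y_k}(i)\}\ge1$ for all $j<k$ with $j\not\equiv k\pmod2$ must have $\sum_i v_{y_k}(i)=\Omega(\log n)$ for some $k$. This is the textbook $\Omega(\log n)$ ordered-search bound transported into the minimax LP, which I would prove by a divide-and-conquer induction on intervals of indices: for a contiguous block $I=[a,b]$ and a fixed parity $\pi$, show $\max_{k\in I,\;k\equiv\pi}\sum_{i\in(a,b]}v_{y_k}(i)=\Omega(\log|I|)$. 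Halving $I$ into $I_1,I_2$, the inductive hypothesis produces heavy inputs $y_{k_1}\in I_1$, $y_{k_2}\in I_2$ of opposite parities; the constraint for the (now legal) pair $(y_{k_1},y_{k_2})$ forces at least one unit of min-weight onto $(k_1,k_2]$, and the part of that block lying on the far side of the split point is fresh mass for one of $y_{k_1},y_{k_2}$, improving its total by $\Omega(1)$ over the previous level. The only care needed is threading the parity so the cross-pair is always legal; I expect this to go through without trouble.

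\textbf{Upper bound $\CA_1(f),\CA(f),\WA(f)=O(1)$.} Since $\CA_1(f)\le\CA(f)$ always and $\CA(f)=\WA(f)$ by Theorem~\ref{thm:main}, it suffices to show $\CA(f)=\WA(f)=O(1)$: every symmetric relation $R$ vanishing whenever $f(x)=f(y)$ has a pair with $R(y_j,y_k)>0$ and a coordinate $i\in(j,k]$ at which $\theta(y_j,i),\theta(y_k,i)=O(1)$. Write $W_k=\sum_m R(y_k,y_m)=L_k+R_k$, where $L_k$ (resp.\ $R_k$) is the mass of $y_k$ on partners of smaller (resp.\ larger) index. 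Because $L_0=0$ while $R_n=0$, the orientation of the mass must flip: there is an index $j$ with $y_j$ right-heavy ($R_j\ge W_j/2$) and $y_{j+1}$ left-heavy. At the unique coordinate $i=j+1$ on which $y_j$ and $y_{j+1}$ disagree, exactly $R_j$ of $y_j$'s mass and $L_{j+1}$ of $y_{j+1}$'s mass passes through $i$, so $\theta(y_j,j+1)=W_j/R_j\le 2$ and $\theta(y_{j+1},j+1)=W_{j+1}/L_{j+1}\le 2$. This is the whole idea; the subtlety is that it requires $R(y_j,y_{j+1})>0$ (and that the inputs carrying mass form an interval of indices). When $R$ puts no weight on short pairs I would instead run the same left-versus-right dichotomy on a \emph{shortest} pair $(y_a,y_b)$ carrying positive weight: by minimality, at every coordinate of the block $(a,b]$ the mass of $y_a$ (resp.\ $y_b$) passing through it equals its full right-mass $R_a$ (resp.\ full left-mass $L_b$), so the pair already contributes $\max\{W_a/R_a,W_b/L_b\}$; if this is large, then one of the endpoints has nearly all its mass on its far side, producing a shorter-on-one-side positive pair to repeat with, and this process terminates at the outermost mass-carrying input, which has mass on a single side and therefore forces a constant.

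\textbf{Main obstacle.} The genuinely delicate point is making the constant upper bound airtight for \emph{all} relations $R$, not only those supported on short, ``interval-connected'' pairs: the clean ``the mass must flip between $y_0$ and $y_n$'' picture has to be replaced by the more careful induction on the shortest positive pair sketched above, and verifying that it terminates correctly is the crux. (One can at least dispose of $\CA_1(f)$ cleanly via Proposition~\ref{caa-prop}: there the adversary's distributions $p,q$ live on the even- and odd-indexed inputs, $\Pr_{x\sim p}[x_i\ne g]$ is a cumulative distribution function of $p$, and no two such functions can both jump across the middle third $[1/3,2/3]$ at the same coordinate, since that coordinate would be simultaneously even and odd — but this argument sees only the rank-$1$ bound.) The $\Omega(\log n)$ half is routine once the parity bookkeeping is set up.
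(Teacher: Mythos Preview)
You have chosen the same function as the paper (ordered search parity) and identified the same structural dichotomy (left-mass versus right-mass of each input) that drives the $\CA(f)\le 2$ bound. The $\CA_1(f)=O(1)$ argument via the cumulative-distribution observation is correct and self-contained. However, there are two points where your proposal diverges from the paper and where it is, respectively, different and incomplete.

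\medskip
\textbf{Lower bound.} The paper does \emph{not} go through the $\MM$ linear program; it invokes the distance-scheme criterion for $\KA$ (Proposition~\ref{distance-scheme}) with $D(x,y)=|\IND(x)-\IND(y)|$, so that all loads are $1$ and $W=\sum_{d \text{ odd}}(n+1-d)/d=\Theta(n\log n)$, giving $\KA(\OSP_n)=\Omega(\log n)$ in two lines. Your divide-and-conquer on the LP is a legitimate alternative, but the parity threading you flag as ``routine'' is not quite: when the cross-pair constraint deposits its extra half-unit on the far side of the split, the input that gains it may have the \emph{wrong} parity for the inductive statement as you phrased it (``$\max_{k\in I,\,k\equiv\pi}$''). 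One needs to carry the induction for both parities simultaneously, or drop the parity constraint from the inductive statement and apply the hypothesis to all four (half, parity) sub-problems. This is fixable, but the paper's route avoids it entirely.

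\medskip
\textbf{Upper bound: the real gap.} Your ``mass must flip'' argument is exactly right when $R$ is supported on adjacent inputs, and you correctly isolate the difficulty for general $R$. But the shortest-positive-pair iteration you sketch does not terminate as described. Concretely: if $(a,b)$ is a shortest positive pair and $y_a$ is left-heavy, every left-partner $y_{a'}$ of $y_a$ satisfies $a-a'\ge b-a$ by minimality, so no shorter pair is produced; and at the candidate coordinate $i=a$ one has $\theta(y_a,a)<2$, but $\theta(y_{a'},a)=W_{a'}/\bigl(\text{mass of }y_{a'}\text{ on }[a,n]\bigr)$, which is not controlled by anything you have set up. Repeating the step moves the left endpoint outward but gives no invariant bounding the \emph{other} endpoint's $\theta$, so reaching the outermost mass-carrying input does not close the argument. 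The paper handles this by a genuinely different mechanism: rather than iterating on a single pair, it assumes by contradiction that for \emph{every} positive pair and \emph{every} coordinate the max exceeds $2$, encodes this as two implications (C1) and (C2) on the row- and column-partial-sums of $R$, and shows by a two-parameter induction on $p\in[N]$ that these force $\alpha_{p+1,p}\ge\beta_{p+1,p}$ and $\gamma_{pp}\ge\delta_{pp}$ for all $p$, which in turn peels off the last row and column of $R$ as zero and cascades down to $R\equiv 0$. That induction is the crux you identify, and it is not supplied by the shortest-pair heuristic.
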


\begin{proof}
Let $S = \{x \in \{0, 1\}^n \mid \exists i \in [0;n]: x_1 = \ldots x_i = 0 \text{ and } x_{i+1} = \ldots = x_n = 1\}$.
In other words, $x$ is any string starting with some number of 0s followed by all 1s.
Define the ``ordered search parity'' function $\OSP_n : S \to \{0, 1\}$ to be $\IND(x) \bmod 2$, where $\IND(x)$ is the last index $i$ such that $x_i = 0$ (in the special case $x = 1^n$, assume that $i = 0$).

For simplicity, further assume that $n$ is even.
First, we prove that $\KA(f) = \Omega(\log n)$.
We use the argument of Laplante and Magniez and the distance scheme method they have adapted from \cite{HNS01}:
\begin{proposition}[\cite{LM04}, Theorem 5] \label{distance-scheme}
Let $f : S \to \{0, 1\}$ be a Boolean function, where $S \subseteq \{0, 1\}^n$.
Let $D$ be a non-negative integer function on $S^2$ such that $D(x, y) = 0$ whenever
$f(x) = f(y)$. Let $W = \sum_{x,y:D(x,y)\neq 0} \frac{1}{D(x,y)}$.
Define the \emph{right load} $\RL(x,i)$ to be the maximum over all values $d$, of the number of $y$ such that $D(x,y) = d$ and $x_i \neq y_i$.
The \emph{left load} $\LL(y,i)$ is defined similarly, inverting $x$ and $y$. Then
\begin{equation*}
\KA(f) = \Omega\left( \frac{W}{|S|} \min_{x,y,i \atop D(x,y)\neq 0, x_i \neq y_i} \max\left\{\frac{1}{\RL(x,i)}, \frac{1}{\LL(y,i)}\right\}\right).
\end{equation*}
\end{proposition}

For each pair $x, y$ such that $f(x) \neq f(y)$ and $\IND(x) > \IND(y)$, let $D(x,y) = \IND(x)-\IND(y)$.
Then we have
\begin{equation*}
W = \sum_{k = 1}^{n/2} ((n+1)- (2k-1)) \frac{1}{2k-1} = (n+1) \sum_{k=1}^{n/2} \frac{1}{2k-1} - \frac n 2.
\end{equation*}
Since $\sum_{k=1}^{n/2} 1/(2k-1) > \sum_{k=1}^{n/2} 1/2k = \frac 1 2 \cdot \sum_{k = 1}^{n/2} 1/k = H_{n/2} = \Theta(\log n)$ as a harmonic number, we have that $W > (n+1)H_{n/2} - n/2= \Theta(n \log n)$.

On the other hand, since for every $x \in S$ and positive integer $d$ there is at most one $y$ such that $D(x,y) = d$, we have that $\RL(x,i) = \LL(y,i) = 1$ for any $x, y$ such that $f(x) \neq f(y)$ and $x_i \neq y_i$.
Since $|S| = n+1$, by Proposition \ref{distance-scheme},
\begin{equation*}
\KA(\OSP_n) = \Omega\left( \frac{n \log n}{n} \right) = \Omega(\log n).
\end{equation*}

Now we prove that $\CA(\OSP_n)  \leq 2$.
Let $ N=n/2 $; we start by fixing an enumeration of $ S $.
By $ x^{(i)}  $, $ i \in [N+1] $, we denote the unique element of $ S $ satisfying $ \IND(x^{(i)})  = 2i-2$ (it is a negative input for $ \OSP_n $); 
by $ y^{(j)}  $, $ j \in [N] $, we denote the unique element of $ S $ satisfying $ \IND(y^{(j)})  = 2j-1$ (it is a positive input for $ \OSP_n $).

We claim that for every $ R = (r_{ij})$,  ${i \in [N+1], j\in [N]} $, with nonnegative entries we have
\[ 
\min_{ \substack{(i,j) \in [N+1] \times [N] : \\ r_{ij}>0 }  }   \min_{\substack{t \in [n] : \\ x^{(i)}_t \neq  y^{(j)} _t}} \max  \{\theta(   x^{(i)} ,t),   \theta(y^{(j)},t)  \} \leq 2,
\]
unless $ r_{ij} =0$ for all $ i,j $.  Since $ \CA(\OSP_n) $  is defined only for  $ R $ which are not identically zero, we conclude that $ \CA(f) \leq 2 $.

For all $ i \in [N+1] $, $ j=[N]$ we set 
\[ 
t_{ij} = \min \{t  :      x^{(i)}_{t} \neq  y^{(j)} _{t}  \}
= 
1+ \min \{ \IND( x^{(i)}),  \IND( y^{(j)})    \}
=
\begin{cases}
2i-1, &  i\leq j, \\
2j, &  i>j.
\end{cases}.
\]
We shall show that, unless $ R \equiv 0 $, there is a pair $ (i,j)  $ satisfying
\begin{equation}\label{eq:p11e01}
r_{ij} > 0 
\quad\text{and}\quad
 \max  \{\theta(   x^{(i)} , t_{ij}),   \theta(y^{(j)},t_{ij})  \} \leq 2.
\end{equation}
Consider $ i \in  \{2,3,\ldots,N+1\}  $ and $ j \in [i-1] $.  
Then we have $ t_{ij}  = 2j $    and
\begin{equation}\label{eq:p11e02}
\theta(   x^{(i)} , t_{ij})= \frac{\sum_{k=1}^N  r_{ik}}{\sum_{k=1}^{j}  r_{ik}  }, 
\quad
 \theta(y^{(j)},t_{ij}) = \frac{\sum_{l=1}^{N+1}  r_{lj} }{\sum_{l=j+1}^{N+1}  r_{lj}  }.
\end{equation}
Now consider  $ i \in  [N] $ and $ j \in \{i,i+1,\ldots,N\} $.  
Then we have $ t_{ij}  = 2i-1 $    and
\begin{equation}\label{eq:p11e03}
\theta(   x^{(i)} , t_{ij})= \frac{\sum_{k=1}^N  r_{ik}}{\sum_{k=i}^{N}  r_{ik}  },
\quad
\theta(y^{(j)},t_{ij}) =  \frac{\sum_{l=1}^{N+1}  r_{lj} }{\sum_{l=1}^{i}  r_{lj}  }.
\end{equation}

We introduce the following notation:
\begin{itemize}
	\item $ \alpha_{ij} = \sum_{k = j+1}^N r_{ik} $ 	and 	$ \beta_{ij} = \sum_{k=1}^j r_{ik}$, for $ i \in [N+1]$  and $ j \in \{0,1,\ldots,i-1\} $;
	\item $ \gamma_{ij} = \sum_{l=i+1}^{N+1}  r_{lj} $ and  $ \delta_{ij} = \sum_{l=1}^{i} r_{lj} $  for   $ i \in [N] $,  $ j \in \{i,i+1,\ldots,N\} $.
\end{itemize}
 By convention, $ \beta_{10} = \alpha_{N+1,N} = 0$. 
 Then \eqref{eq:p11e02}--\eqref{eq:p11e03} can be rewritten as follows:
\[ 
\theta(   x^{(i)} , t_{ij}) = 
\begin{cases}
1 +   \alpha_{ij}  / \beta_{ij}, & j < i , \\
1 +    \beta_{i,i-1} / \alpha_{i, i-1}, & j \geq  i ,
\end{cases}
\qquad
\theta(y^{(j)},t_{ij}) = 
\begin{cases}
1 +   \delta_{jj}  / \gamma_{jj}, & j < i , \\
1 +    \gamma_{ij} / \delta_{ij}, & j \geq  i .
\end{cases}
 \]
 
Consequently,   \eqref{eq:p11e01} holds if    there is a pair $ (i,j) \in [N+1] \times [N] $ such that $ r_{ij} >0$ and
\[ 
\begin{cases}
  \left(  \alpha_{ij}  \leq  \beta_{ij}  \right) \land \left(  \delta_{jj}   \leq  \gamma_{jj} \right), & j <i,\\
  \left(\beta_{i,i-1}  \leq   \alpha_{i, i-1}  \right)  \land  \left(   \gamma_{ij}   \leq  \delta_{ij}  \right) , &  j \geq i.
\end{cases}
 \]
 Suppose the contrary: for all  $ (i,j) \in [N+1] \times [N] $  we have
 \begin{equation}\label{c:C1}\tag{C1}
 i> j \Rightarrow   
\left(   r_{ij} = 0  \right)
\lor 
\left( \alpha_{ij} > \beta_{ij}   \right) 
\lor 
\left( \delta_{jj}  > \gamma_{jj}   \right)
 \end{equation}
 and
 \begin{equation}\label{c:C2}\tag{C2}
 i \leq  j \Rightarrow   
 \left( r_{ij} = 0\right) 
 \lor 
 \left(\beta_{i,i-1} > \alpha_{i, i-1}\right)  
 \lor 
\left(\gamma_{ij} >\delta_{ij}  \right)
 .
 \end{equation}
 We shall show by induction   that     for all $ i \in \{0,1,\ldots,N\}$, $ j\in [N] $ the following holds:
 \begin{equation}\label{eq:p11e04}
 \alpha_{i +1,i }  \geq  \beta_{i+1,i  }
 \quad\text{and}\quad
 \gamma_{jj} \geq \delta_{jj}. 
 \end{equation}
 When that is established, it follows that all $ r_{ij} $ must be zero. To see that, recall $ \alpha_{N+1,N }  =0 $. Since \eqref{eq:p11e04}  implies $   \beta_{N+1,N } \leq  \alpha_{N+1,N }  =0$, we obtain
$  \beta_{N+1,N }=\sum_{k=1}^N r_{N+1,k} \leq 0 $. However, all $ r_{lk} $ are nonnegative, hence $ r_{N+1,k}=0   $ for all $ k \in [N]$. That, in turn, implies	
$ \sum_{l=1}^{N} r_{lN} = \delta_{NN}
\leq 
\gamma_{NN} =  r_{N+1,N} = 0 $, where we    have used  \eqref{eq:p11e04} again. Now $ r_{lN} = 0 $ for all $ l \in [N]$ (and also for $ l=N+1 $), thus $ \alpha_{N,N -1} =  r_{NN} = 0 $.
Continue inductively to obtain that 
$ \alpha_{i+1,i }  =  \beta_{i+1,i }=0$ and $\gamma_{jj} = \delta_{jj} =0  $ (and $ r_{ij} = 0 $)   for all $ i,j $.
It remains to show \eqref{eq:p11e04}.

\textbf{The base case:} we already have $  \alpha_{10} \geq 0 = \beta_{10} $. 
For \textbf{the inductive step}, suppose that \eqref{eq:p11e04} holds for all $ i   \in \{0,1,\ldots,p-1\}  $ and $ j \in [p-1] $, for some $ p \in [N]$ (for $ p=1 $, the inequality  $  \gamma_{jj} \geq \delta_{jj} $ remains  unproven for all $ j $). We shall show that both inequalities hold also with $ i=j=p $. The proof is by  contradiction.

Suppose that $ \gamma_{pp} < \delta_{pp} $. From \eqref{c:C2} it follows that either $ r_{pp} = 0 $ or $ \beta_{p,p-1} > \alpha_{p,p-1} $. The latter is false by the inductive hypothesis, thus  $ r_{pp} = 0 $.
But then
\[ 
\gamma_{p-1,p} =   \sum_{l=p}^{N+1}  r_{lp} = \gamma_{pp} 
\quad\text{and}\quad
\delta_{p-1,p} = \sum_{l=1}^{p-1} r_{lp}= \delta _{pp} .
\]
Thus we have $ \gamma_{p-1,p} < \delta_{p-1,p} $. Again, from \eqref{c:C2} it follows that either $ r_{p-1,p} = 0 $ or $ \beta_{p-1,p-2} > \alpha_{p-1,p-2} $. The latter is false, thus   $ r_{p-1,p} = 0 $, which implies
$ \gamma_{p-2,p}= \gamma_{p-1,p} < \delta_{p-1,p} = \delta_{p-2,p} $.
Continuing similarly, we obtain
$ r_{1p}=r_{2p} = \ldots = r_{pp} =0 $.
However, then $ \delta_{pp} =0$ and the inequality  $ \gamma_{pp} < \delta_{pp} $ is impossible, a contradiction. 

Suppose that $ \beta_{p+1,p} >  \alpha_{p+1,p} $. From \eqref{c:C1} it follows  that either $ r_{p+1,p} = 0 $ or $ \delta_{pp} > \gamma_{pp} $. As shown previously, the latter is false, thus $ r_{p+1,p}=0 $. But then we have
\[ 
\alpha_{p+1,p-1} = \sum_{k = p}^N r_{p+1,k} =  \alpha_{p+1,p} 
\quad\text{and}\quad
\beta_{p+1,p-1}  = \sum_{k=1}^{p-1} r_{p+1,k} = \beta_{p+1,p}.
\]
Hence we also have  $ \beta_{p+1,p-1} >  \alpha_{p+1,p-1} $. Then again from \eqref{c:C1} we either have  $ \delta_{p-1,p-1} > \gamma_{p-1,p-1} $, or $ r_{p+1,p-1} = 0 $. The former is false by the inductive hypothesis, the latter implies $ \beta_{p+1,p-2}  = \beta_{p+1,p-1} >  \alpha_{p+1,p-1} =  \alpha_{p+1,p-2}  $.
Continuing similarly, we obtain
$ r_{p+1,1}  = \ldots = r_{p+1,p} =0 $.
But then  $ \beta_{p+1,p} = 0\leq   \alpha_{p+1,p} $, a contradiction.
This completes the inductive step. \qedhere
\end{proof}

\subsection{Rank-1 Adversary vs. Relational Adversary}

In this section we show a function such that the relational adversary bound $\CA(f)$ is quadratically larger than the rank-1 relational adversary $\CA_1(f)$.
First we give an example of a non-Boolean function, and then convert it to a Boolean function with the same separation.

\begin{theorem} \label{thm:cra1cra}
There exists a function $f : S \to \mathbb N$, where $S \subseteq \{0,1\}^n$, such that $\CA(f) = \Omega(n)$ and $\CA_1(f) = O(\sqrt n)$.
\end{theorem}

\begin{proof}
Let $n$ be a perfect square and $N^2 = n$.
For an input $x \in \{0, 1\}^n$, split it into $N$ blocks of $N$ consecutive bits, and denote the $j$-th bit in the $i$-th block by $x_{ij}$.
Then define $S$ to be the set of all inputs $x$ such that the Hamming weight of each block is exactly 1.
Let $f$ be any injection on $S$.

First, we prove that $\CA(f) = \Omega(N^2)$.
Let $R(x,y) = 1$ iff $x$ and $y$ differ in exactly 2 bits.
Pick any two such inputs $x$ and $y$, and a position $i$ such that $x_i \neq y_i$.
W.l.o.g.~assume that $x_i = 0$.
Examine $\theta(x,i) = \frac{\sum_{z \in S}R(x,z)}{\sum_{z \in S,z_i \neq x_i}R(x,z)}$.
\begin{itemize}
\item The number of $z$ such that $x$ and $z$ differ in 2 bits is $N(N-1)$, since we can pick any of the $N$ blocks of $x$ and change the position of the single 1 in that block to any of $N-1$ other positions.
Hence, $\sum_{z \in S} R(x,z) = N(N-1)$.
\item There is only one $z$ such that $z_i \neq x_i$ and $x$ and $z$ differ in exactly two bits, as $z_i = 1$.
\end{itemize}
Thus, $\sum_{z \in S,z_i \neq x_i} R(x,z) = 1$ and $\theta(x,i) = N(N-1)/1$.
Therefore, for any $x,y,i$ such that $R(x,y) > 0$ and $x_i \neq y_i$, we have $\max(\theta(x,i),\theta(y,i)) = N(N-1)$, and $\CA(f) = \Omega(N^2)$.

Now we prove that $\CA_1(f) \leq N$.
By Proposition \ref{caa-prop}, let $X$, $Y$ be the partition of $S$ and $u : X \to \mathbb R$, $v : Y \to \mathbb R$ be the probability distributions that achieve $\CA_1(f)$ 
(e.g., $\sum_{x \in X} u(x) = \sum_{y \in Y} v(y) = 1$).
For $g : S \to \mathbb R$, $i \in [n]$, $b \in \{0, 1\}$, define
$$s(g,i,b) = \sum_{x \in g^{-1} \atop x_i = b} g(x).$$
Then $\theta(x,i) = \frac{1}{s(v,i,1-x_i)}$ and $\theta(y,i) = \frac{1}{s(u,i,1-y_i)}$.
We prove the following lemma:
\begin{lemma} \label{thm:cra1lemma}
For all $i \in [n]$, there is a value $b \in \{0, 1\}$ such that
$$s(u,i,b) \leq \frac{1}{\CA_1(f)} \text{\quad and\quad} s(v,i,b) \leq \frac{1}{\CA_1(f)}.$$
\end{lemma}
\begin{proof}
Let $p := 1/\CA_1(f)$.
Assume on the contrary that for each $b \in \{0, 1\}$, either $s(u,i,b) > p$ or $s(v,i,b) > p$.
We distinguish two cases:
\begin{itemize}
\item For some $b$, we have $s(v,i,b) > p$ and $s(u,i,1-b) > p$.
Then we can pick $x \in X$, $y \in Y$ such that $x_i = b$, $y_i = 1-b$ and $u(x)v(y) > 0$.
We have
$$\max\{\theta(x,i),\theta(y,i)\} = \max\left\{\frac{1}{s(v,i,b)}, \frac{1}{s(u,i,1-b)}\right\} < \frac 1 p = \CA_1(f),$$
a contradiction.
\item W.l.o.g., $s(u,i,0) > p$, $s(u,i,1) > p$, $s(v,i,0) \leq p$ and $s(v,i,1) \leq p$.
In that case
$$2p < s(u,i,0) + s(u,i,1) = 1 = s(v,i,0) + s(v,i,1) \leq 2p,$$
a contradiction.\qedhere
\end{itemize}
\end{proof}

Now assume on the contrary that $\CA_1(f) > N$.
For $b \in \{0, 1\}$, let $\overline b := 1-b$.
Suppose that $b_i$ is the value that satisfies the conditions of Lemma \ref{thm:cra1lemma} for $i \in [n]$.
Define $z := \overline{b_1} \overline{b_2} \ldots \overline{b_n}$.

First, we prove that $z \in S$.
Pick any $i \in [N]$ (any block).
Let $B = \{(i-1)N+1,\ldots,iN\}$ be the set of variables of the $i$-th block.
Then $$\sum_{j \in B} s(u,j,\overline{z_j}) = \sum_{j \in B} s(u,j,b_j) \leq N \cdot \frac{1}{\CA_1(f)} < 1$$ by the lemma and the assumption.
Since $\sum_{x \in X} u(x) = 1$, there is an $x \in X$ such that $x_{ij} = z_{ij}$ for all $j \in [N]$, thus the $i$-th block of $z$ is a correct Hamming weight 1 block.
Since we picked $i$ arbitrarily, each block of $z$ is correct and $z \in S$.

Now, we prove that $z \in X$.
Examine any $x \in X$ that is not $z$.
The inputs $x$ and $z$ differ in at least one block, hence they have 1s in different positions in that block.
Thus there is a position $i$ such that $z_i = 1$ and $x_i = 0$.
Therefore, we have
$$\sum_{x \in X \atop x \neq z} u(x) \leq \sum_{i : z_i = 1} s(u,i,\overline{z_i}) \leq N \cdot \frac{1}{\CA_1(f)} < 1$$
by the lemma and the assumption.
Since $\sum_{x \in X} u(x) = 1$, it follows that $u(z) > 0$, thus $z \in X$.
Similarly, we prove that $z \in Y$ and we get a contradiction.
\end{proof}

We can extend this result to Boolean functions:
\begin{theorem}
There exists a Boolean function $f : S \to \{0,1\}$, where $S \subseteq \{0,1\}^n$, such that $\CA(f) = \Omega(n)$ and $\CA_1(f) = O(\sqrt n)$.
\end{theorem}

\begin{proof}
Let $S$ be the same as in Theorem \ref{thm:cra1cra}.
Define $f$ as
$$f(x) = \left(\sum_{i \in [n]} i \cdot x_i\right) \bmod 2.$$

For $\CA(f)$, now define $R(x,y) = 1$ iff $f(x) \neq f(y)$ and $x$ and $y$ differ in exactly 2 bits.
For any $x$, we can change the position of any 1 in any block to a position of a different parity in that block in either $\lfloor N/2 \rfloor$ or $\lceil N/2 \rceil$ ways.
Therefore, $\sum_{y \in S} R(x,y) \geq N\cdot \lfloor N/2 \rfloor = \Omega(N^2)$.
By the same argument as in the previous proof, we have $\CA(f) = \Omega(1)$.

On the other hand, the argument for the rank-1 adversary from the previous proof works for any $X$, $Y$ (in this case, $X = f^{-1}(0)$, $Y = f^{-1}(1)$).
Hence, we still have $\CA_1(f) = O(N)$.
\end{proof}

\section{Limitation of Fractional Block Sensitivity}

In this section we show that there is a certain barrier that the fractional block sensitivity cannot overcome for partial functions.

\subsection{Upper Bound in Terms of Block Sensitivity}

\begin{theorem}
For any partial function $f : S \to H$, where $S \subseteq G^n$, and any $x \in S$,
\begin{equation*}
\fbs(f) \leq \sqrt{n \cdot \bs(f)}. 
\end{equation*}
\end{theorem}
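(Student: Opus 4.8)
The idea is to reduce the statement to a purely combinatorial fact about the hypergraph of sensitive blocks and prove it by a size-first greedy argument finished off with Cauchy--Schwarz. Fix any $x \in S$ (in particular one achieving $\fbs(f) = \fbs(f,x)$), let $\clB$ be the set of sensitive blocks of $x$, and let $w : \clB \to [0,1]$ be an optimal solution of the $\fbs(f,x)$ linear program, so $\sum_{B \in \clB} w(B) = \fbs(f,x) =: \phi$ and $\sum_{B \ni i} w(B) \le 1$ for every $i$. From feasibility alone one gets the single scalar inequality that powers the whole argument:
\[
\sum_{B \in \clB} w(B)\,|B| \;=\; \sum_{i \in [n]} \sum_{B \ni i} w(B) \;\le\; n .
\]

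\textbf{Extracting disjoint blocks.} Next I would build a large family of pairwise disjoint sensitive blocks greedily. Keep a set of ``surviving'' blocks, initially $\{B \in \clB : w(B) > 0\}$; repeatedly pick a surviving block $e_t$ of minimum cardinality $s_t := |e_t|$, add it to the matching, and delete every surviving block that meets $e_t$ (in particular $e_t$ itself); let $r_t$ be the total $w$-weight of the blocks deleted at step $t$. This produces pairwise disjoint sensitive blocks $e_1,\dots,e_k$ of $x$, hence $k \le \bs(f,x) \le \bs(f)$. Moreover: every positive-weight block is deleted exactly once, so $\sum_{t=1}^k r_t = \phi$; every block deleted at step $t$ meets $e_t$, so $r_t \le \sum_{i \in e_t}\sum_{B \ni i} w(B) \le |e_t| = s_t$; and every block deleted at step $t$ was still surviving, hence has cardinality $\ge s_t$, so $\sum_{t=1}^k s_t r_t \le \sum_{B \in \clB} w(B)\,|B| \le n$.

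\textbf{Cauchy--Schwarz.} Finally, writing $r_t = \sqrt{s_t r_t}\cdot\sqrt{r_t/s_t}$ and applying Cauchy--Schwarz,
\[
\fbs(f,x)^2 \;=\; \Big(\sum_{t=1}^k r_t\Big)^{\!2} \;\le\; \Big(\sum_{t=1}^k s_t r_t\Big)\Big(\sum_{t=1}^k \frac{r_t}{s_t}\Big) \;\le\; n \cdot k \;\le\; n\cdot\bs(f),
\]
using $\sum_t s_t r_t \le n$ and $r_t/s_t \le 1$ (from $r_t \le s_t$). Taking square roots and maximizing over $x$ gives $\fbs(f) \le \sqrt{n\cdot\bs(f)}$ (indeed the sharper pointwise bound $\fbs(f,x) \le \sqrt{n\cdot\bs(f,x)}$).

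\textbf{Where the work is.} None of the individual estimates is deep; the care goes into the greedy bookkeeping --- specifically, that picking a block of \emph{minimum surviving size} is exactly what forces $\sum_t s_t r_t \le \sum_B w(B)|B|$, and that every deleted block meets the chosen $e_t$ so that $r_t \le s_t$. The one genuine ``trick'' is the weighting $\sqrt{s_t r_t},\ \sqrt{r_t/s_t}$ in Cauchy--Schwarz, which is what delivers the clean constant $1$; a cruder Markov- or birthday-type split of the weight only yields $O(\sqrt{n\cdot\bs(f)})$.
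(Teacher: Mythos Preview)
Your proof is correct and, in my view, cleaner than the paper's, though both arguments share the same combinatorial seed.

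\textbf{Comparison with the paper's proof.} The paper introduces a parametrized quantity $\fbs_N(f,x)$ (adding the constraint $\sum_{B}|B|\,w(B)\le N$ to the $\fbs$ LP) and proves by induction on $k=\bs(f,x)$ that $\fbs_N(f,x)\le\sqrt{Nk}$. The inductive step picks a shortest block $D$, deletes every block meeting $D$ (reducing $\bs$ by one), and then has to solve a two-variable continuous optimization in the removed weight $t$ and the shortest-block length $\ell$ via calculus. Your argument and the paper's share the core move ``pick a minimum-size block and delete everything meeting it'', but you iterate it into a full greedy extraction of $k\le\bs(f,x)$ disjoint blocks and replace the induction plus calculus by three scalar inequalities ($\sum_t r_t=\phi$, $r_t\le s_t$, $\sum_t s_t r_t\le n$) finished off by Cauchy--Schwarz. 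This is genuinely more elementary: it avoids the auxiliary $\fbs_N$, the induction, and the derivative computations, yet still lands on the sharp constant~$1$. The paper's parametrized formulation, on the other hand, makes the role of the ``budget'' $\sum_B|B|\,w(B)\le N$ more explicit and might generalize more readily if one wanted a version with a different resource constraint.
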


\begin{proof}
We will prove that $\fbs(f,x) \leq \sqrt{n \cdot \bs(f,x)}$ for any $x \in S$.
First we introduce a parametrized version of the fractional block sensitivity.
Let $x \in S$ be any input, $\clB$ the set of sensitive blocks of $x$ and $N \leq n$ a positive real number.
Define
\begin{align*}
\fbs_N(f, x) = \max_w \sum_{B \in \clB} w(B) \hspace{1cm} \text{s.t.} \quad &\forall i \in [n]: \sum_{B \in \clB : i \in B} w(B) \leq 1,\\
													&\sum_{B \in \clB} |B|\cdot w(B) \leq N.
\end{align*}
where $w : \clB \to [0; 1]$.
If we let $N = n$, then the second condition becomes redundant and $\fbs_n(f,x) = \fbs(f,x)$.

For simplicity, let $k = \bs(f, x)$.
We will prove by induction on $k$ that $\fbs_N(f, x) \leq \sqrt{N k}$.
If $k= 0$, the claim obviously holds, so assume $k > 0$.
Let $\ell$ be the length of the shortest block in $\clB$.
Then
\begin{equation*}
\sum_{B \in \clB} \ell \cdot w(B) \leq \sum_{B \in \clB} |B|\cdot w(B) \leq N
\end{equation*}
and $\fbs_N(f, x) = \sum_{B \in \clB} w(B) \leq N/\ell$.

On the other hand, let $D$ be any shortest sensitive block.
Let $f'$ be the restriction of $f$ where the variables with indices in $D$ are fixed to the values of $x_i$ for all $i \in D$.
Note that $\bs(f', x) \leq k-1$, as we have removed all sensitive blocks that overlap with $D$.
Let $\clB'$ be the set of sensitive blocks of $x$ on $f'$ and let $\clT = \{B \in \clB \mid B \cap D \neq \varnothing\}$, the set of sensitive blocks that overlap with $D$ (including $D$ itself).
Then no $T \in \clT$ is a member of $\clB'$, therefore
\begin{equation*}
\sum_{B' \in \clB'} |B'|\cdot w(B') \leq N - \sum_{T \in \clT} |T| \cdot w(T) \leq N - \ell \cdot \sum_{T \in \clT} w(T).
\end{equation*}

Denote $t = \sum_{T \in \clT} w(T)$.
We have that $t \leq |D| = \ell$, as any $T \in \clT$ overlaps with $D$.
By combining the two inequalities we get
\begin{align*}
\fbs_N(f, x) &\leq \max_{\ell \in [0; n]} \min\left\{\frac N \ell, \max_{t \in [0; \ell]} \left\{ t + \fbs_{N-\ell t}(f', x)\right\} \right\} \\
& \leq \max_{\ell \in [0; n]} \min\left\{\frac N \ell, \max_{t \in [0; \ell]} \left\{ t + \sqrt{(N-\ell t)(k-1)}\right\} \right\}.
\end{align*}
If $N/\ell \leq \sqrt{N k}$, we are done.
Thus further assume that $\ell < \sqrt{N/k}$.

Denote $g(t) = t + \sqrt{(N-\ell t)(k-1)}$.
We need to find the maximum of this function on the interval $[0;\ell]$ for a given $\ell$.
Its derivative,
\begin{equation*}
g'(t) = 1 - \frac \ell 2 \sqrt{\frac{k-1}{N-\ell t}},
\end{equation*}
is a monotone function in $t$.
Thus it has exactly one root,
$
t_0 = N/\ell - (k-1) \cdot \ell/4.
$
Therefore, $g(t)$ attains its maximum value on $[0;\ell]$ at one of the points $\{0, t_0, \ell\}$.
\begin{itemize}
\item If $t = 0$, then $g(0) = \sqrt{N(k-1)} \leq \sqrt{Nk}$.
\item If $t = t_0$, then, as $t \leq \ell < \sqrt{N/k}$,
\begin{align*}
\sqrt{Nk} - \frac{k-1}{4} \cdot \sqrt{\frac{N}{k}} &< \frac N \ell -  (k-1) \frac \ell 4 < \sqrt{\frac N k} \\
\sqrt k - \frac{k-1}{4 \sqrt k} &< \sqrt{\frac 1 k} \\
3k &< 0.
\end{align*}
The last inequality has no solutions in natural numbers for $k$, so this case is not possible.
\item If $t = \ell$, then $g(t) = \ell + \sqrt{(N-\ell^2)(k-1)}$.
\end{itemize}
Now it remains to find the maximum value of $h(k) = \ell + \sqrt{(N-\ell^2)(k-1)}$ on the interval $[0;\sqrt{N/k}]$.
The derivative is equal to
\begin{equation*}
h'(\ell) = 1-\ell\cdot \sqrt{\frac{k-1}{N-\ell^2}}.
\end{equation*}
The only non-negative root of $h'(\ell)$ is equal to $\ell_0 = \sqrt{N/k}$.
Then $h(\ell)$ is monotone on the interval $[0; \sqrt{N/k}]$.
Thus $h(\ell)$ attains its maximal value at one of the points $\{0, \sqrt{N/k}\}$.
\begin{itemize}
\item If $\ell = 0$, then $h(\ell) = \sqrt{N(k-1)} < \sqrt{Nk}$.
\item If $\ell = \ell_0 = \sqrt{N/k}$, then
\begin{equation*}
h(\ell) = \sqrt{\frac N k} + \sqrt{\left(N-\frac N k\right)(k -1)} = \sqrt N \left(\sqrt{\frac 1 k} + (k-1) \sqrt{\frac 1 k}\right) = \sqrt{Nk}.
\end{equation*}
\end{itemize}
Thus, $h(\ell) \leq \sqrt{Nk}$ and that concludes the induction.

Therefore, $\fbs(f,x) = \fbs_n(f,x) \leq \sqrt{n \cdot \bs(f,x)}$, hence also $\fbs(f) \leq \sqrt{n \cdot \bs(f)}$ and we are done.
\end{proof}

We also give a simpler proof of the same (asymptotically) upper bound:

\begin{theorem}
For any partial function $f : S \to H$, where $S \subseteq G^n$, and any $x \in S$,
\begin{equation*}
\fbs(f) = O(\sqrt{n \cdot \bs(f)}). 
\end{equation*}
\end{theorem}

\begin{proof}
We show that for all $x \in S$, we have $\FC(f,x) = O(\sqrt{n \cdot \bs(f,x)})$.
The claim then follows as $\fbs(f,x) = \FC(f,x)$. 

Since $\FC(f,x)$ is a minimization linear program, it suffices to show a fractional certificate $v$ of size at most $O(\sqrt{n\cdot \bs(f,x)})$.
Let $k$ be a parameter between 1 and $n$.
Let $\clB = \{B \subseteq [n] \mid f(x) \neq f(x^B), |B| \leq k\}$ be a maximum set of non-overlapping sensitive blocks of $x$ of size at most $k$.
Then $|\clB| \leq \bs(f)$.
Let $S = \bigcup_{B \in \clB} B$ be the set of all positions in blocks of $\clB$.
We construct the fractional certificate $v$ by setting $v(i) = 1$ for all $i \in S$, and $v(i) = 1/k$ for all $i \notin S$.

Let $B$ be any sensitive block of $x$ of size at most $k$.
As $\clB$ is a maximum set of non-overlapping sensitive blocks, there must exist a $B' \in \clB$ such that $B \cap B' \neq \varnothing$.
Therefore, $\sum_{i \in B} v(i) \geq |B \cap B'| \geq 1$.
On the other hand, if $|B| \geq k$, then $\sum_{i \in B} v(i) \geq |B|/k \geq 1$.
Hence $v$ is a feasible fractional certificate.
The size of $v$ is $\sum_{i \in [n]} v(i) \leq |\clB|\cdot k + n/k \leq \bs(f) \cdot k + n/k$.
The last expression asymptotically reaches the minimum when $\bs(f) \cdot k = n/k$, which happens if $k = \sqrt{n / \bs(f)}$.
Then $\FC(f,x) = O(\sqrt{n\cdot \bs(f)})$.
\end{proof}

\subsection{A Matching Construction}

\begin{theorem}
For any $k \in \mathbb N$, there exists a partial Boolean function $f : S \to \{0, 1\}$, where $S \subseteq \{0, 1\}^n$, such that $\bs(f) = k$ and $\fbs(f) = \Omega(\sqrt{n \cdot \bs(f)})$.
\end{theorem}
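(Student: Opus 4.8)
The plan is to realize a finite projective plane as the set of sensitive blocks of a single input. Recall that $\fbs(f,x)$ is exactly the fractional matching number of the hypergraph on vertex set $[n]$ whose edges are the sensitive blocks of $x$, while $\bs(f,x)$ is its integral matching number; the gap between these two quantities is largest for highly intersecting set systems, and the prototypical such object is a projective plane, where every two lines meet yet the fractional matching is large.

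Concretely, fix a prime power $q$ and let $PG(2,q)$ be the projective plane of order $q$: it has $N := q^2+q+1$ points and $N$ lines, every line contains $q+1$ points, every point lies on $q+1$ lines, and any two distinct lines intersect in exactly one point. Take $k$ vertex-disjoint copies of this structure on a ground set $[n]$ with $n = kN$, and let $\mathcal L$ be the resulting family of $kN$ lines. Define $S = \{0^n\} \cup \{\mathbf 1_L \mid L \in \mathcal L\} \subseteq \{0,1\}^n$, where $\mathbf 1_L$ is the indicator vector of $L$, and set $f(0^n) = 0$ and $f(\mathbf 1_L) = 1$ for every $L \in \mathcal L$.

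The next step is to verify $\bs(f) = k$. On the input $0^n$ the sensitive blocks are exactly the sets $\{i \mid (\mathbf 1_L)_i \neq 0\} = L$ for $L \in \mathcal L$; since two lines in the same copy always meet while lines in distinct copies are disjoint, a maximum collection of pairwise disjoint sensitive blocks takes exactly one line from each copy, so $\bs(f, 0^n) = k$. On each input $\mathbf 1_L$ the only input with a different value is $0^n$, hence $\bs(f, \mathbf 1_L) = 1$, and therefore $\bs(f) = k$. For the lower bound on $\fbs(f)$, evaluate $\fbs(f, 0^n)$ with the uniform weight $w(L) = 1/(q+1)$ on every line: each point lies on exactly $q+1$ lines, so every index constraint is tight, and the objective equals $kN/(q+1) = n/(q+1) = k\bigl(q + 1/(q+1)\bigr) > kq$. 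Since $n/k = N = q^2+q+1$ gives $q+1 \le 2\sqrt{n/k}$, we conclude $\fbs(f) \ge \fbs(f, 0^n) = n/(q+1) \ge \frac{1}{2}\sqrt{nk} = \frac{1}{2}\sqrt{n \cdot \bs(f)}$, which is the claimed bound; taking $q \to \infty$ over prime powers pushes the constant towards $1$.

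The construction is immediate; the only points needing care are (i) confirming that no unexpected sensitive blocks appear, which is precisely why the domain is restricted to $0^n$ together with the line indicators, and (ii) invoking the existence of $PG(2,q)$, which holds for every prime power $q$, so for each $k$ we may simply fix any such $q$ (already $q = 2$ suffices). Rather than a genuine obstacle, the main conceptual step is the observation that the $\bs$-versus-$\fbs$ gap is governed by the integral-versus-fractional matching gap of a set system, and that a design in which every pair of blocks intersects makes this gap as large as the bound $\fbs(f) \le \sqrt{n \cdot \bs(f)}$ permits.
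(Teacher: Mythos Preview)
Your proof is correct and follows essentially the same approach as the paper: both take $k$ disjoint copies of a finite projective plane and use the lines as the sensitive blocks of the all-zeros input. The only cosmetic difference is that the paper phrases the construction as $\OR_k \circ \FPP_t$ and appeals to composition results from \cite{GSS16} to read off $\bs$ and $\fbs$, whereas you work directly with the smaller domain $\{0^n\}\cup\{\mathbf 1_L\}$ and compute both quantities by hand; this makes your argument slightly more self-contained but is otherwise the same idea.
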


\begin{proof}
Take any finite projective plane of order $t$, then it has $\ell = t^2+t+1$ many points.
Let $n = k\ell$ and enumerate the points with integers from 1 to $\ell$.
Let $X = \{0^\ell\}$ and $Y = \{y \mid$ there exists a line $L$ such that $y_i = 1$ iff $i \in L$$\}$.
Define the (partial) finite projective plane function $\FPP_t : X \cup Y \to \{0, 1\}$ as $\FPP_t(y) = 1 \iff y \in Y$.

We can calculate the 1-sided block sensitivity measures for this function:
\begin{itemize}
\item $\fbs^0(\FPP_t) \geq (t^2+t+1) \cdot \frac{1}{t+1} = \Omega(t)$, as each line gives a sensitive block for $0^n$; since each point belongs to $t+1$ lines, we can assign weight $1/(t+1)$ for each sensitive block and that is a feasible solution for the fractional block sensitivity linear program.
\item $\bs^0(\FPP_t) = 1$, as any two lines intersect, so any two sensitive blocks of $0^n$ overlap.
\item $\bs^1(\FPP_t) = 1$, as there is only one negative input.
\end{itemize}

Next, define $f : S^{\times k} \to \{0, 1\}$ as the composition of $\OR$ with the finite projective plane function, $f = \OR_k (\FPP_t(x^{(1)}), \ldots, \FPP_t(x^{(k)}))$.
By the properties of composition with $\OR$ (see Proposition 31 in \cite{GSS16} for details), we have
\begin{itemize}
\item $\fbs(f) = \max\{\fbs^0(f), \fbs^1(f)\} \geq \fbs^0(f) = \fbs^0(\FPP_t)\cdot k = \Theta(t) \cdot k = \Theta(t\cdot n/t^2) = \Theta(n/t)$,
\item $\bs(f) = \max\{\bs^0(f), \bs^1(f)\} = \bs^0(\FPP_t) \cdot k = k = \Theta(n/t^2)$.
\end{itemize}
As $\sqrt{n \cdot n/t^2} = n/t$, we have $\fbs(f) = \Omega(\sqrt{n\cdot \bs(f)})$ and hence the result. \qedhere
\end{proof}

Note that our example is also tight in regards to the multiplicative constant, since $t$ can be unboundedly large (and the constant arbitrarily close to 1).

\section{Open Ends}

\paragraph{Limitation of the Adversary Bounds.}
In the quantum setting, the certificate barrier shows a limitation on the quantum adversary bounds.
In the classical setting, by our results, fractional block sensitivity characterizes the classical adversary bounds for total functions and thus is of course an upper bound.
Is there a general limitation on the classical adversary methods for partial functions?

\paragraph{Block Sensitivity vs. Fractional Block Sensitivity.}
We have exhibited an example with the largest separation between the two measures for partial functions, $\bs(f) = O(\sqrt{n\cdot\bs(f)})$.
For total functions, one can show that $\fbs(f) \leq \bs(f)^2$, but the best known separations achieve $\fbs(f) = \Omega(\bs(f)^{3/2})$ \cite{GSS16, APV18}.
Can our results be somehow extended for total functions to close the gap?

\section{Acknowledgements}

We are grateful to Rahul Jain for igniting our interest in the classical adversary bounds and Srijita Kundu and Swagato Sanyal for helpful discussions.
We also thank J\={a}nis Iraids for helpful discussions on block sensitivity versus fractional block sensitivity problem.

\bibliography{bibliography}

\newpage

\appendix

\section{Rank-1 Relational Adversary Definition} \label{caa-def}

\begin{proof}[Proof of Proposition \ref{caa-prop}]
Let $u, v$ be vectors that maximize $\CA_1(f)$.
Let $h \in H$ be any letter and $S_h = f^{-1}(h)$.
Since for every $x, y$, such that $f(x) = f(y)$, we have $u(x)v(y) = 0$, it follows that either $u(x) = 0$ for all $x \in S_h$ or $v(x) = 0$ for all $x \in S_h$.
Therefore, we can find a partition $A \cup B = H$ such that:
\begin{itemize}
\item if $u(x) > 0$, then $f(x) \in A$;
\item if $v(x) > 0$, then $f(y) \in B$;
\item for every $h \in H$, either $h \in A$ or $h \in B$.
\end{itemize}
This partition therefore also defines a partition of the inputs, $X \cup Y = S$, where $X = f^{-1}(A)$ and $Y = f^{-1}(B)$.

Now, notice that $\theta(x,i) $ does not depend on the particular choice of $ x $ if $ x_i:=g_1 \in G$ is fixed. 
Similarly, let $ y_i := g_2  \in G$ be fixed, then $ \theta(y,i) $ does not depend on the particular choice of $y $. 
This allows to simplify the expression for $ \CA_1(f) $, since for each $ i $ we can fix values $ g_1 \neq g_2 $ (such that there exist  $ x \in X$, $ y \in Y$ with $ u(x) v(y)>0 $ and $ x_i  =g_1 $ and $ y=g_2 $) and ignore the remaining components of  $ x $, $ y $, i.e.,
\begin{equation*}
\CA_1(f) = \max_{A,B: A\cup B = H} \max_{u, v} \min_{\substack{i \in [n], \\ g_1, g_2 \in G, g_1 \neq g_2: \\ \exists x \in X, y \in Y: \\ x_i = g_1, y_i = g_2,\\  u(x)v(y) > 0}} \max\left\{\frac{\sum_{y \in Y} v(y)}{\sum_{y \in Y : y_i \neq g_1} v(y)}, \frac{\sum_{x \in X} u(x)}{\sum_{x \in X : x_i \neq g_2} u(x)}\right\}.
\end{equation*}

Further assume that both $X$ and $Y$ are non-empty, because otherwise the value of $\CA_1$ would not be defined.
Notice that multiplying either $ u $ or $ v $ with any scalar does not affect the value of $ \CA_1 $.
Hence, we can scale $u$ and $v$ to probability distributions $p$ and $q$ over $X$ and $Y$, respectively.
More specifically, we can further simplify $\CA_1$:
\begin{align*}
\CA_1(f) &= \max_{A,B:\atop A\cup B = H} \max_{p, q} \min_{\substack{i \in [n], \\ g_1, g_2 \in G: g_1 \neq g_2: \\ \exists x \in X, y \in Y: \\ x_i = g_1, y_i = g_2,\\  p(x)q(y) > 0}} \frac{1}{\min\left\{\sum_{y \in Y \atop y_i \neq g_1} q(y), \sum_{x \in X \atop x_i \neq g_2} p(x)\right\}} \\
&= \max_{A,B:\atop A\cup B = H} \max_{p, q} \min_{\substack{i \in [n], \\ g_1, g_2 \in G: g_1 \neq g_2: \\ \exists x \in X, y \in Y: \\ x_i = g_1, y_i = g_2,\\  p(x)q(y) > 0}} \frac{1}{\min\left\{\Pr_{y \sim q}[y_i \neq g_1], \Pr_{x \sim p}[x_i \neq g_2]\right\}}.
\end{align*}
\end{proof}

We can further simplify this definition if the inputs are Boolean:
\begin{proposition} \label{bool-caa}
Let $f : S \to H$, where $S \subseteq \{0, 1\}^n$.
Let $A \cup B = H$ be a partition of the output alphabet, i.e., $A \cap B = \varnothing$.
Let $p$ and $q$ be probability distributions over $X:=f^{-1}(A)$ and $Y:=f^{-1}(B)$, respectively.
Then
\begin{equation*}
\CA_1(f) = \max_{A,B, \atop p,q} \min_{i \in [n], \atop b \in \{0, 1\}} \frac{1}{\min\left\{\Pr_{y \sim q}[y_i \neq b], \Pr_{x \sim p}[x_i = b]\right\}}.
\end{equation*}
\end{proposition}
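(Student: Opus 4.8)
The plan is to derive Proposition~\ref{bool-caa} directly from Proposition~\ref{caa-prop} by specializing to the two-letter input alphabet $G=\{0,1\}$, working from the simplified expressions for $\theta$ recorded just before that proposition. First I would fix a partition $A\cup B=H$ with $A\cap B=\varnothing$ and distributions $p$ on $X=f^{-1}(A)$, $q$ on $Y=f^{-1}(B)$ (so in particular $X,Y\neq\varnothing$). For a pair $x\in X$, $y\in Y$ with $p(x)q(y)>0$ and a coordinate $i$ with $x_i\neq y_i$, rescaling $u,v$ to $p,q$ turns those simplified expressions into $\theta(x,i)=1/\Pr_{y'\sim q}[y'_i\neq x_i]$ and $\theta(y,i)=1/\Pr_{x'\sim p}[x'_i\neq y_i]$. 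Since $|G|=2$ and $x_i\neq y_i$, we have $y_i=1-x_i$; putting $b:=x_i$ these become $\theta(x,i)=1/\Pr_{y'\sim q}[y'_i\neq b]$ and $\theta(y,i)=1/\Pr_{x'\sim p}[x'_i=b]$, so
\[
\max\{\theta(x,i),\theta(y,i)\}=\frac{1}{\min\{\Pr_{y\sim q}[y_i\neq b],\,\Pr_{x\sim p}[x_i=b]\}},
\]
which is exactly the $(i,b)$-term appearing in Proposition~\ref{bool-caa}.

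It then remains to match the ranges of the two inner minimizations. In $\CA_1$ (and in Proposition~\ref{caa-prop}) the minimum is taken only over triples that are actually realized by some positively-weighted pair --- here, over $(i,b)$ such that some $x\in\operatorname{supp}(p)$ has $x_i=b$ and some $y\in\operatorname{supp}(q)$ has $y_i=1-b$ --- whereas Proposition~\ref{bool-caa} minimizes over all of $[n]\times\{0,1\}$. I would reconcile this by observing that ``$(i,b)$ is realized'' is equivalent to ``$\Pr_{x\sim p}[x_i=b]>0$ and $\Pr_{y\sim q}[y_i\neq b]>0$'', i.e.\ to the displayed $(i,b)$-term being finite; every unrealized $(i,b)$ contributes the value $+\infty$ and is therefore invisible to the minimum. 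Moreover at least one realized $(i,b)$ always exists: $\operatorname{supp}(p)\subseteq f^{-1}(A)$ and $\operatorname{supp}(q)\subseteq f^{-1}(B)$ are nonempty and disjoint (as $A\cap B=\varnothing$), so any $x\in\operatorname{supp}(p)$ and $y\in\operatorname{supp}(q)$ differ in some coordinate $i$, and $(i,x_i)$ is realized. Hence for every admissible $(A,B,p,q)$ the two inner minima coincide, and taking the maximum over $(A,B,p,q)$ yields the stated identity.

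The whole argument is essentially a substitution, so there is no deep obstacle; the one place that needs care is the bookkeeping in the second step --- one must pin down which of the two letters parametrizes the term (namely $b=x_i$, the value on the $p$-side) so that ``the term is finite'' coincides on the nose with ``the term is present in the $\CA_1$ minimization''. With the opposite choice this equivalence fails, and the claim that dropping the realizability constraint is harmless no longer works cleanly. Once the parametrization is set up this way, nothing beyond $|G|=2$ and $f^{-1}(A)\cap f^{-1}(B)=\varnothing$ is used.
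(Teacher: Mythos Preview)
Your proposal is correct and follows essentially the same route as the paper: specialize Proposition~\ref{caa-prop} to $G=\{0,1\}$ by substituting $b:=g_1$ (so $g_2=1-b$ and $\Pr_{x\sim p}[x_i\neq g_2]=\Pr_{x\sim p}[x_i=b]$), then argue that the realizability constraint on $(i,b)$ can be dropped because any unrealized $(i,b)$ has $\min\{\Pr_{y\sim q}[y_i\neq b],\Pr_{x\sim p}[x_i=b]\}=0$ and hence does not affect the extremum. The paper phrases the second step via the quantities $\alpha,\beta$ (maxima of the inner $\min$) rather than reciprocals, and uses the factorization $\sum_{x_i=b,\,y_i\neq b}p(x)q(y)=\Pr_p[x_i=b]\cdot\Pr_q[y_i\neq b]$ to get the equivalence you state; your additional check that at least one $(i,b)$ is realized is a point the paper leaves implicit.
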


\begin{proof}
For $g_1, g_2 \in \{0, 1\}$, $ g_1 \neq g_2 $ implies $ g_2 = g_1 \oplus 1 $. It follows that
  \[ 
  \CA_1 (f) = 
  \max_{A, B, \atop p, q}  
 \min_{\substack{i \in [n], \\ b  \in \{0,1\}:\\    \exists x \in X, y \in Y:\\ x_i = b,  y_i \neq  b,\\ p(x)q(y) > 0} }
\frac{1}{\min\left\{\Pr_{y \sim q}[y_i \neq b], \Pr_{x \sim p}[x_i = b]\right\}}
.
 \]
 Moreover, we can drop the requirement $\exists x \in X, y \in Y: x_i = b,  y_i \neq  b, p(x)q(y) > 0$. To see that, fix any $p, q$,  and consider the quantities
 \begin{align*}
 &	\alpha =  
 \max_{i \in [n]} 
 	\max_{\substack{b  \in \{0,1\}:\\    \exists x \in X, y \in Y:\\ x_i = b,  y_i \neq  b,\\ p(x)q(y) > 0} }
 	\min \left\{\Pr_{x \sim p}[x_i = b], \Pr_{y \sim q}[y_i \neq b]\right\}
 	 \\
 & \beta =  
 \max_{i \in [n] \atop b  \in \{0,1\} }
 \min \left\{\Pr_{x \sim p}[x_i = b], \Pr_{y \sim q}[y_i \neq b]\right\}.
 \end{align*}
 Clearly, $ \alpha \leq  \beta  $. To show the converse inequality, consider any $ i\in [n] $ and (if such exists) $ b \in \{0,1\} $ satisfying
$  u(x) v(y) = 0 $ for any $x \in X, y \in Y$ with $ x_i = b $, $ y_i \neq b $ (to deal with the possibility no such $x, y$ exist, we consider the empty sum to be zero). Then also
\[ 
0=
\sum_{\substack{x \in X, y \in Y \\ x_i = b,\ y_i \neq b} }
p(x) q(y)=
\left(\sum_{x \in X :  x_i = b  }p(x) \right)\left(\sum_{y \in Y :  y_i \neq  b  }q(y)\right) = \Pr_{x \sim p}[x_i = b] \cdot \Pr_{y \sim q}[y_i \neq b].
 \]
 Therefore,  $  \min \left\{\Pr_{x \sim p}[x_i = b] , \Pr_{y \sim q}[y_i \neq b]
 \right\} =0  \leq  \alpha$. Thus $ \alpha = \beta $. Thus the claim follows.
\end{proof}

We also note that $ \CA_1(f)  $ can be found the following way. 
Let $A \cup B = H$ be any  suitable partition of $ H $ and denote
\begin{equation*}
\CA_1(f,A,B) =
\max_{p, q} \min_{\substack{i \in [n], \\ g_1, g_2 \in G: g_1 \neq g_2: \\ \exists x \in X, y \in Y: \\ x_i = g_1, y_i = g_2,\\  p(x)q(y) > 0}} \frac{1}{\min\left\{\Pr_{y \sim q}[y_i \neq g_1], \Pr_{x \sim p}[x_i \neq g_2]\right\}}.
\end{equation*}
Then $ \CA_1(f) = \max_{A,B} \CA_1(f,A,B) $. On the other hand, for each fixed partition $ A,B $ the value $ \CA_1(f,A,B) $ can be found from the following program:
\begin{proposition}
Let $f : S \to H$, where $S \subseteq G^n$.
Let $A \cup B = H$ be any partition of $H$ such that $A, B \neq \varnothing$.
Let $X = f^{-1}(A)$ and $Y = f^{-1}(B)$.
The value of $\CA_1(f,A,B)$ is equal to the optimal solution of the following program:
\begin{align*}
\text{maximize} \sum_{x \in X} w_x \hspace{1cm} \text{s.t.} \hspace{.5cm} 
&\sum_{x \in X} w_y  = \sum_{y \in Y} w_y  ,   \\
&\min\left \{\sum_{\substack{x \in X : \\ x_i \neq g_2}} w_x ,  \sum_{\substack{y \in Y : \\ y_i \neq  g_1}} w_y 
\right\} \leq 1, &   \substack{i \in [n], \\ g_1, g_2 \in G , g_1 \neq g_2, \\ w_xw_y > 0} \\
&w_x \geq 0,  &   x \in S.
\end{align*}
\end{proposition}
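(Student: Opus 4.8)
The plan is to read the displayed program as a \emph{de-homogenized} restatement of the definition of $\CA_1(f,A,B)$: the pair of probability distributions $(p,q)$ over $X$ and $Y$ is replaced by an unnormalized pair of nonnegative weight vectors on $X\cup Y=S$ that are only required to carry the \emph{same} total mass; the expression $1/\min\{\cdots\}$ that is minimized inside the definition is rewritten as the constraint $\min\{\cdots\}\le 1$; and that common total mass becomes the objective to be maximized. I will prove the two inequalities ``$\mathrm{OPT}\ge\CA_1(f,A,B)$'' and ``$\mathrm{OPT}\le\CA_1(f,A,B)$'' separately, in each case passing between a feasible point of one formulation and a feasible point of the other by a single global rescaling.

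For ``$\mathrm{OPT}\ge\CA_1(f,A,B)$'', I would take $p$ on $X$ and $q$ on $Y$ attaining $\CA_1(f,A,B)=:c$; here $c>0$, in fact $c\ge 1$, since each quantity $\min\{\cdots\}$ appearing is a minimum of two probabilities and the index set over which it is minimized is nonempty (because $X,Y\ne\varnothing$ forces $f(x)\ne f(y)$, hence $x\ne y$, for any $x\in\operatorname{supp}p$, $y\in\operatorname{supp}q$, producing an admissible triple). By optimality, every triple $(i,g_1,g_2)$ with $g_1\ne g_2$ that is witnessed by positive-probability inputs (some $x\in X$ with $p(x)>0$, $x_i=g_1$, and some $y\in Y$ with $q(y)>0$, $y_i=g_2$) satisfies $\min\{\Pr_{y\sim q}[y_i\ne g_1],\Pr_{x\sim p}[x_i\ne g_2]\}\le 1/c$. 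Put $w_x:=c\,p(x)$ for $x\in X$ and $w_x:=c\,q(x)$ for $x\in Y$. Then all $w_x\ge 0$, $\sum_{x\in X}w_x=c=\sum_{y\in Y}w_y$, and for each such triple the left-hand side of the remaining constraint equals $c\cdot\min\{\Pr_{y\sim q}[y_i\ne g_1],\Pr_{x\sim p}[x_i\ne g_2]\}\le c\cdot(1/c)=1$. Since scaling by $c>0$ preserves supports, the triples for which the program imposes a constraint on $w$ are exactly those just checked, so $w$ is feasible with objective value $c$, giving $\mathrm{OPT}\ge c$.

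For ``$\mathrm{OPT}\le\CA_1(f,A,B)$'', I would take a feasible $w$ with objective $W=\sum_{x\in X}w_x=\sum_{y\in Y}w_y$. If $W=0$ the objective is $0\le\CA_1(f,A,B)$ and there is nothing to prove; otherwise set $p(x):=w_x/W$ on $X$ and $q(y):=w_y/W$ on $Y$, which are probability distributions with the same supports as $w$. For any triple $(i,g_1,g_2)$, $g_1\ne g_2$, witnessed by positive-probability inputs — equivalently by positive-weight inputs — the constraint gives $\min\{\Pr_{y\sim q}[y_i\ne g_1],\Pr_{x\sim p}[x_i\ne g_2]\}=\tfrac1W\min\{\sum_{y\in Y:\,y_i\ne g_1}w_y,\ \sum_{x\in X:\,x_i\ne g_2}w_x\}\le 1/W$, hence $1/\min\{\cdots\}\ge W$. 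Taking the minimum over all such triples and then the maximum over $(p,q)$ shows $\CA_1(f,A,B)\ge W$, and hence $\mathrm{OPT}\le\CA_1(f,A,B)$.

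The only genuinely delicate point, which I would flag as the main obstacle, is the bookkeeping of which triples $(i,g_1,g_2)$ the inner minimum of $\CA_1(f,A,B)$, respectively the family of constraints of the program, actually ranges over: in the definition these are the triples \emph{witnessed by positive-probability inputs}, while in the program they are the triples \emph{witnessed by positive-weight inputs}. Because in both directions $w$ and $(p,q)$ differ only by the positive scalar $c$ (resp.\ $W$), the two families of witnesses coincide and the feasibility checks transfer verbatim. Apart from this and the trivial degenerate cases ($X=\varnothing$ or $Y=\varnothing$, excluded by hypothesis, and $W=0$, handled above), the argument is the routine homogeneity computation written out above.
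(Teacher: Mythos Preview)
Your proposal is correct and follows essentially the same approach as the paper's proof: in both directions one passes between the weight vector $w$ and the pair of probability distributions $(p,q)$ by a single global rescaling (by $\CA_1(f,A,B)$ in one direction and by the common mass in the other), and checks that feasibility and the objective value transfer. You are a bit more explicit than the paper about the support bookkeeping for the witnessed triples and the degenerate case $W=0$, but the argument is the same.
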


The proof is analogous to that of Lemma \ref{cmmAlt}.
\begin{proof}
Denote the optimal value of this program by $ \mu $. Then $ \mu \leq \CA_1 (f,A,B) $, since we can take $p(x) = w_x / \mu  $, $ q(y) = w_y/\mu  $ (where $\{w_x\}_{x \in S} $) is an optimal solution of the program).
This way we obtain a feasible solution for $ \CA_1(f) $, which gives 
\[ 
\min \left\{
	{\sum_{\substack{x \in X : \\ x_i \neq g_2}} p(x) },
	{\sum_{\substack{y \in Y : \\ y_i \neq g_1}} q(y) }
\right\}
=
\frac{1}{\mu} \cdot
\min \left\{\sum_{\substack{x \in X : \\ x_i \neq g_2}} w_x ,  \sum_{\substack{y \in Y : \\ y_i \neq  g_1}} w_y \right\} \leq \frac 1 \mu
 \]
for each $ i \in [n], g_1, g_2 \in G$ such that $g_1 \neq g_2$ and there exist $x \in X, y \in Y$ with $x_i = g_1$ and $y_i = g_2$, thus $ \CA_1(f,A,B) \geq \mu $.

Let us show the converse inequality.
If the probability distributions $ p,q $ provide an optimal solution for $ \CA_1(f,A,B)  $, then 	$ w_x = p(x) \cdot \CA_1(f,A,B) $ and $ w_y = q(y) \cdot \CA_1(f,A,B)  $ gives a feasible solution for the program and the value  this solution is $ \sum_{x \in X} w_x =\CA_1(f,A,B) $.
 Hence, also $ \CA_1(f,A,B)\leq \mu $.
\end{proof}

 For Boolean outputs, the partition of $H$ can be fixed to $A=\{0\}, B= \{1\}$, giving a single program. For Boolean inputs, the condition $g_1, g_2 \in G, g_1 \neq g_2, w_xw_y > 0$ can be replaced simply by $b \in \{0, 1\}$ by Proposition \ref{bool-caa}.
Therefore, for Boolean functions this program can be recast as a mixed-integers linear program, providing an algorithm for finding $ \CA_1(f) $.
\end{document}